\newtheoremstyle{jplain}
{}
{}
{\normalfont}
{}
{\bfseries}
{.}
{4pt}
{\thmname{#1} \thmnumber{#2}\thmnote{\hspace{2pt}(#3)}}
\theoremstyle{jplain}
\newtheorem{theorem}{Theorem}
\newtheorem{definition}{Definition}
\newtheorem{lemma}{Lemma}
\newtheorem{example}{Example}
\newtheorem{remark}{Remark}
\renewenvironment{proof}[1][\proofname]{\par
  \pushQED{\qed}%
  \normalfont \topsep6\p@\@plus6\p@\relax
  \trivlist
  \item\relax
  {\bfseries
  #1\@addpunct{.}}\hspace\labelsep\ignorespaces
}{%
  \popQED\endtrivlist\@endpefalse
}
\begin{document}

\title{The Completeness of Reasoning Algorithms for Clause Sets in Description Logic $\mathcal{ALC}$}

\etitle{}

\author{Daiki Takahashi\afil{1} \quad \quad Ken Kaneiwa\afil{1}}

\affiliation{%
	\afil{1} Department of Computer and Network Engineering, \\
	Graduate School of Informatics and Engineering, The University of Electro-Communications}

\abstract{
On the Semantic Web, metadata and ontologies are used to enable computers to read data.
The Web Ontology Language (OWL) has been proposed as a standard ontological language, and various inference systems for this language have been studied.
Description logics are regarded as the theoretical foundations of OWL; they provide the syntax and semantics of a formal language for describing ontologies and knowledge bases.
In addition, tableau algorithms for description logics have been developed as the standard reasoning algorithms for decidable problems.
However, tableau algorithms generate inefficient reasoning steps owing to their nondeterministic branching for disjunction as well as the increase in the size of models occasioned by existential quantification.
In this study, we propose conjunctive normal form (CNF) concepts, which utilize a flat concept form for description logic $\mathcal{ALC}$ in order to develop algorithms for reasoning about sets of clauses.
We present an efficient reasoning algorithm for clause sets where any $\mathcal{ALC}$ concept is transformed into an equivalent CNF concept.
Theoretically, we prove the soundness, completeness, and termination of the reasoning algorithms for the satisfiability of CNF concepts.
}

\maketitle

\section{Introduction}

The Semantic Web~\cite{DLHandbook_2007} is a framework that enables computers to read information on the Web by adding metadata~\cite{metadata} and ontologies.
It is not possible for computers to distinguish between different metadata, whether they represent the same content or not.
This is solved by defining the relationships between metadata, or by defining metadata in greater detail using a standard vocabulary.
The technology and representation used to realize this is called an ontology~\cite{ontology}.
The Web Ontology Language (OWL)~\cite{OWL_primer_2009} has been proposed by the Web Ontology Working Group of the World Wide Web Consortium (W3C) as a standard language for describing ontologies on the Web.
It is important to formalize the syntax and semantics of an ontological language, based on which the termination, soundness, and completeness of reasoning algorithms can be proven for ontology-based reasoning tasks.

Description logics are logical systems that establish methodologies for knowledge representation and reasoning and that specialize in the characteristics (called concepts) of nominal and adjectival vocabularies.
The languages used in description logics are categorized into families based on expressivity and computational complexity, which depends on the combination of constructs and syntax.
Many studies have been conducted to investigate the properties of description logics with respect to ontological languages and knowledge representation~\cite{DLHandbook_2007}.
Description logics are formalized by syntax, semantics, and reasoning algorithms; the decidability and computational complexity of reasoning tasks have been shown to be the theoretical foundation for OWL.
Description logics are expressive but decidable for application to reasoning, although first\nobreakdash-order logic is not decidable (it is semi-decidable).
Since the tableau method is the standard reasoning algorithm for description logics, the termination, soundness, and completeness of the tableau method are guaranteed.
However, depending on the expressiveness and scale of concepts, branches that increase the number of reasoning operations and the reasoning time may occur, which may lead to inefficient reasoning.

In general, disjunctions and universal or existential quantifiers in logical expressions increase the computational complexity of the reasoning related to the expressions.
For efficient reasoning, logical formulas can be transformed to a conjunctive normal form (CNF)~\cite{CNF} or a clausal form to simplify the derivation.
Clausal forms have been used to simplify logical formulas, and many reasoning algorithms have been proposed for clausal forms.
The Davis\nobreakdash-Putnam\nobreakdash-Logemann\nobreakdash-Loveland (DPLL) algorithm can be applied to the clausal forms of propositional logic to efficiently decide their satisfiability.
The resolution principle is applied to the clausal forms of first-order predicate logic for refutation reasoning.

The standard reasoning algorithms of description logics are based on the tableau calculus~\cite{DLHandbook_2007}.
The hypertableau algorithm~\cite{Hypertableau, HypertableauCalculus} proposed by Boris \textit{et al.} is an improvement on the existing hypertableau algorithm~\cite{Hypertableaux} and hyper\nobreakdash-resolution method~\cite{HyperResolution}.
Although the existing absorption optimizations cannot completely reduce the computational complexity caused by disjunctions, the algorithm by Boris \textit{et al.} solves this problem by improving absorption.
To prove the termination of a reasoning algorithm for knowledge bases containing assertions with relations based on cyclic concepts, such as ``humans who have some human children,'' techniques for blocking a redundant reasoning of individuals with a repetitive set of concepts are applied.
In conventional methods, ancestor pairwise blocking works to prevent model expansion only when one individual is an ancestor of the other.
Anywhere pairwise blocking of the hypertableau algorithm extends this scheme, and avoids mutual blocking by defining an ordering for all individuals while generating blocking between individuals that are not ancestrally related, thereby reducing the computational complexity more efficiently.

The description logic programs~\cite{DLP} have been proposed us an approach that attempts to combine description logics and logic programs.
In this approach, ontology descriptions in description logics and rule expressions in logic programs complement each other by translating through intermediate representations.
As a result, it is possible to reason about description logics using efficient algorithms in logic programs.

In this study, we formalize a reasoning algorithm for the simplified forms of concepts in the description logic $\mathcal{ALC}$ (Attributive Language with Complements)~\cite{ALC_1991} via the following three steps.
\begin{itemize}
    \item A flat concept form (CNF concepts) for description logic $\mathcal{ALC}$ based on the conjunctive normal form is defined.
    \item A decidable reasoning algorithm is designed using inference rules for the clause sets of CNF concepts.
    \item The soundness and completeness of the algorithm for clause sets are proved by introducing restricted tableau for CNF concepts.
\end{itemize}
 Essentially, we define a novel concept form (CNF concepts) in description logics based on CNF, and clause sets of this new concept form are used by reasoning algorithms.
Any $\mathcal{ALC}$ concept can be transformed to an equivalent CNF concept by applying De Morgan's laws, distributive properties, and so on.
CNF concepts can be represented as flat concepts with clause sets, which allows reasoning algorithms to be constructed with efficient inference rules.
The flow of the algorithm is as follows.
First, we simplify the clauses by selecting one literal from each clause in the clause set.
Next, the subconcepts $C$ of the universal role concepts $\forall R.C$ are added as $\exists R.(C \sqcup D)$ to the existential role concepts $\exists R.D$ to reduce all universal role concepts while preserving the equivalence of the concepts.
Finally, the satisfiability of the target concept is determined by determining the satisfiability of the remaining existential role concepts.
We prove the soundness and completeness of the algorithm using a restricted tableau that corresponds to reasoning for clause sets in description logic $\mathcal{ALC}$.

The rest of this paper is structured as follows.
As a preparation, Section 2 presents basic definitions of the syntax and semantics of description logic $\mathcal{ALC}$.
We define the conjunctive normal form in description logics and describe a reasoning algorithm for clause sets in Section 3.
In Section 4, we prove the soundness, completeness, and termination of the algorithm introduced in Section 3.
Some derivation examples using the algorithm are presented in Section 5.
Finally, we conclude the paper and propose future work in Section 6.

\section{Preliminaries}

\subsection{Description Logics}

The concept languages of description logics constitute a family of languages with different expressive power depending on the combination of components and syntactic rules.
A concept language in description logic $\mathcal{ALC}$ consists of a set $\mathbf{CN}$ of concept names $A$, a set $\mathbf{RN}$ of role names $R$, a set $\mathbf{IN}$ of individual names $a$, and the logical connectives $\sqcap$ (conjunction), $\sqcup$ (disjunction), $\lnot$ (negation), and the quantifiers $\exists$ (existential) and $\forall$ (universal).
The top concept $\top$, which contains all individuals, and the bottom concept $\bot$, which contains nothing, are also included in $\mathbf{CN}$.

\begin{definition}[Syntax]
\label{def:ALC_syntax}
    Let $A$ be a concept name, $R$ be a role name, and $C$, $D$ be $\mathcal{ALC}$-concepts.
    The set of $\mathcal{ALC}$-concepts is defined inductively by the following rules: 
    \begin{equation*}
        A \mid \top \mid \bot \mid \lnot C \mid C \sqcap D \mid C \sqcup D \mid \forall R.C \mid \exists R.C
    \end{equation*}
\end{definition}

Complex concepts can be expressed by combining any concept, role names, and logical connections.
As an example, ``animals who have legs" can be represented as follows:
\begin{equation*}
    Animal \sqcap \exists hasPart.Leg
\end{equation*}

An interpretation $\mathcal{I}$ of $\mathcal{ALC}$ consists of a pair $\left(\Delta^\mathcal{I}, \cdot^\mathcal{I}\right)$ of a non-empty set $\Delta^\mathcal{I}$ and a function $\cdot^\mathcal{I}$ such that
\begin{align*}
    \text{for each } A \in \mathbf{CN}, \; & A^\mathcal{I} \subseteq \Delta^\mathcal{I} (\text{in particular}, \; \top^\mathcal{I} = \Delta^\mathcal{I} \text{ and } \bot^\mathcal{I}=\emptyset), \\
    \text{for each } R \in \mathbf{RN}, \; & R^\mathcal{I} \subseteq \Delta^\mathcal{I} \times \Delta^\mathcal{I}, \text{ and} \\
    \text{for each } \circ \in \mathbf{IN}, \; & \circ^\mathcal{I} \in \Delta^\mathcal{I}.
\end{align*}

\begin{definition}[Semantics]
\label{def:ALC_interpretation}
    Let $\mathcal{I} = \left(\Delta^\mathcal{I}, \cdot^\mathcal{I}\right)$ be an interpretation.
    The interpretation of $\mathcal{ALC}$-concepts are defined inductively as follows:
    \begin{align*}
        (\lnot C)^\mathcal{I} & = \Delta^\mathcal{I} \setminus C^\mathcal{I} \\
        (C \sqcap D)^\mathcal{I} & = C^\mathcal{I} \cap D^\mathcal{I} \\
        (C \sqcup D)^\mathcal{I} & = C^\mathcal{I} \cup D^\mathcal{I} \\
        (\forall R.C)^\mathcal{I} & = \left\{x \in \Delta^\mathcal{I} \mid \forall y\left[(x, y) \in R^\mathcal{I} \to y \in C^\mathcal{I}\right]\right\} \\
        (\exists R.C)^\mathcal{I} & = \left\{x \in \Delta^\mathcal{I} \mid \exists y\left[(x, y) \in R^\mathcal{I} \land y \in C^\mathcal{I}\right]\right\}
    \end{align*}
\end{definition}
Let $C$ be an $\mathcal{ALC}$-concept. $C$ is satisfiable if there exists an interpretation $\mathcal{I}$ such that $C^\mathcal{I} \neq \emptyset$, and otherwise, $C$ is unsatisfiable.

\section{Reasoning about Clause Sets in Description Logics}
\label{sec:reason_for_clause_set}

\subsection{Conjunctive Normal Form of Description Logics}
\label{subsec:DL_clause_format}

In this section, we define a new form of $\mathcal{ALC}$-concepts based on the conjunctive normal form.
Every concept name $A$ and the negation $\lnot A$ are concept literals, and if $F$ is a conjunctive normal form, then $\exists R.F$ and $\forall R.F$ are concept literals.
Let us denote a concept literal $L$.
A disjunction of concept literals $L_1 \sqcup \cdots \sqcup L_n$ is called a clause, denoted by $CL$.
A conjunction of clauses $CL_1 \sqcup \cdots \sqcup CL_n$ is called a conjunctive normal form and is denoted as $F$.

\begin{definition}[Concept Literals, Clauses, and Conjunctive Normal Forms]
\label{def:CNF}
    Let $A$ be a concept name, $R$ be a role name, $L_1, \ldots, L_m$ be concept literals, $CL_1, \ldots, CL_n$ be clauses, and $F$ be a conjunctive normal form.
    The set of concept literals, the set of clauses, and the set of conjunctive normal forms are inductively defined as follows:
    \begin{align*}
        L & = A \mid \lnot A  \mid \exists R.F \mid \forall R.F \\
        CL & = L_1 \sqcup \ldots \sqcup L_m \\
        F & = CL_1 \sqcap \ldots \sqcap CL_n \\
    \end{align*}
\end{definition}

Any $\mathcal{ALC}$-concept $C$ can be converted to a conjunctive normal form (denoted by CNF$(C)$), i.e., $C \equiv \mathrm{CNF}\left(C\right)$, by the following laws:
\begin{enumerate}
    \item De Morgan's and double negation laws: the left side is transformed into the right side such that negations appear only in concept names.
    \begin{align*}
        \lnot \left(C \sqcap D \right) & \equiv \lnot C \sqcup \lnot D \\
        \lnot \left(C \sqcup D \right) & \equiv \lnot C \sqcap \lnot D \\
        \lnot \left(\exists R.C\right) & \equiv \forall R.\lnot C \\
        \lnot \left(\forall R.C\right) & \equiv \exists R.\lnot C \\
        \lnot \lnot C & \equiv C
    \end{align*}
    \item Distributive laws: the left side is transformed into the right side such that no conjunction is included in disjunctions.
    \begin{align*}
        \left(C \sqcap D \right) \sqcup E & \equiv \left(C \sqcup E \right) \sqcap \left(D \sqcup E \right) \\
        C \sqcup \left(D \sqcap E \right) & \equiv \left(C \sqcup D \right) \sqcap \left(C \sqcup E \right) \\
    \end{align*}
    \item Associative laws: the parentheses on the left or right side are deleted by transforming into $C \sqcup D \sqcup E$ and $C \sqcap D \sqcap E$, respectively.
    \begin{align*}
        (C \sqcup D) \sqcup E & \equiv C \sqcup (D \sqcup E) \\
        (C \sqcap D) \sqcap E & \equiv C \sqcap (D \sqcap E)
    \end{align*}
\end{enumerate}

A concept name $A$ and its negation $\lnot A$ are complementary literals of each other.
The complementary literals of concept literals $\exists R.F$ and $\forall R.F$ are $\forall R.\mathrm{CNF}(\lnot F)$ and $\exists R.\mathrm{CNF}(\lnot F)$, respectively.
We denote a complementary literal of a concept literal $L$ as $\overline{L}$.

\subsection{Reasoning Algorithms for Clause Sets}

We design a reasoning algorithm for clause sets in $\mathcal{ALC}$.
Given an $\mathcal{ALC}$-concept $C$, the conjunctive normal form $\mathrm{CNF}(C) = CL_1 \sqcap \ldots \sqcap CL_n$ is represented as a clause set as follows:
\begin{equation*}
    \left\{CL_1, \ldots, CL_n\right\}
\end{equation*}
where each clause $CL_i$ is a literal set $\left\{L_1, \ldots, L_n\right\}$ of $L_1 \sqcup \ldots \sqcup L_m$.
In particular, a clause $CL$ is called a unit clause if $|CL| = 1$, and an empty clause if $|CL| = 0$.
Thus, the disjunction $CL_1 \sqcup CL_2$ of two literal sets $CL_1, CL_2$ is represented by $CL_1 \cup CL_2$, and the conjunction $F_1 \sqcap F_2$ of two clause sets $F_1, F_2$ is represented by $F_1 \cup F_2$.
For example, the two clauses $\lnot A_1 \sqcup A_2$ and $A_3 \sqcup \exists R.A_1$ are represented by the literal sets $\{\lnot A_1, A_2\}$ and $\{A_3, \exists R.A_1\}$, respectively.
Furthermore, the conjunctive normal form $(\lnot A_1 \sqcup A_2) \sqcap (A_3 \sqcup \exists R.A_1)$ of those two clauses is represented as the clause set $\{\{\lnot A_1, A_2\}, \{A_3, \exists R.A_1\}\}$.

An algorithm for deciding the satisfiability of the conjunctive normal form of a concept $F = \mathrm{CNF}(C)$ is provided by inference rules.

\begin{definition}[Inference Rules A1, A2, and A3]
\label{def:solve_rule}
    Let $\mathcal{S}_i$ be a family of clause sets, $\mathcal{S}_{i+1}$ is derived from $\mathcal{S}_i$ by applying one of the following rules to each clause set $F \in \mathcal{S}_i$:
    \begin{description}
        \item{$\mathrm{(A1)}$} if $L \in CL$ in $F$, then
        \begin{enumerate}[]
            \item $CL \to \{L\}$,
        \end{enumerate}
        \item{$\mathrm{(A2)}$} if $\forall R.F_1 \in CL$ in $F$, then
        \begin{enumerate}[]
            \item $F \to F \setminus \{CL' \in F \mid \forall R.F_1 \in CL' \}$ and for all $\exists R.F_2 \in CL'$ in $F$, $\exists R.F_2 \to \exists R.(F_1 \cup F_2)$, and
        \end{enumerate}
        \item{$\mathrm{(A3)}$} if all clauses in $F$ are unit clauses of $\{A\}$, $\{\lnot A\}$ or $\{\exists R.F'\}$, then
        \begin{enumerate}[]
            \item $F \to F \setminus \{\{\exists R.F_1\}\}$ and add $F_1$ to  $\mathcal{S}_{i+1}$ for some $\{\exists R.F_1\} \in F$ where $F$ is a parent of $F_1$ with respect to $R$. 
        \end{enumerate}
    \end{description}
\end{definition}

Inference rule A1 is applied to a clause $CL$ containing two or more concept literals.
A concept literal $L$ is selected from $CL$ and all concept literals except $L$ are removed from $CL$.
That is, if $L$ is selected, $CL$ is converted as follows:
\begin{align*}
    L \sqcup L_1 \sqcup \ldots \sqcup L_m \to L \\
\end{align*}
Inference rule A2 is applied to $\forall R.F_1$ in a clause $CL$ in $F$.
After removing all clauses $CL'$ containing the concept $\forall R.F_1$, every existential role concept $\exists R.F_2$ with the same role name $R$ are converted into $\exists R.(F_1 \cup F_2)$ by combining it with $F_1$.
Inference rule A3 is applied to an existential role concept $\exists R.F_2$ if every clause is a unit clause without a universal role concept.
The unit clause $\{\exists R.F_1\}$ is removed from the clause set $F$ and the part $F_1$ of $\exists R.F_1$ is added to $\mathcal{S}_{i+1}$.

Let a derivation tree be a tree such that each node is a family of clause sets $\mathcal{S}_i$ and each edge is an application of an inference rule.
A derivation tree for $F$ is constructed such that the root node is $\mathcal{S}_0 = \left\{F\right\}$ and each child node $\mathcal{S}_{i+1}$ is derived from its parent node $\mathcal{S}_i$ by applying an inference rule.
A family of clause sets $\mathcal{S}_i$ is complete if none of the inference rules can be applied to it in a derivation tree.
Let $\mathcal{S}_i$ be a complete node in a derivation tree for $F$.
If all clause sets in $\mathcal{S}_i$ contain neither empty clauses nor contradictions, then the algorithm decides that $F$ is satisfiable.
Otherwise, it is unsatisfiable.

In a derivation tree, unsatisfiability is inherited by child nodes from their parent nodes.
In other words, if a clause set of $\mathcal{S}_i$ is unsatisfiable, then some clause set is unsatisfiable in $\mathcal{S}_{i+1}$ derived by inference rules A1, A2, and A3.
By contraposition, if all clause sets in $\mathcal{S}_{i+1}$ are satisfiable, then all clause sets in $\mathcal{S}_i$ are also satisfiable.
Therefore, if a node $\mathcal{S}_i$ is satisfiable, then the satisfiability of the root node $\mathcal{S}_0 = \{F\}$ will eventually be decided.

The unsatisfiability is inherited by applying inference rules as follows.
Inference rule A1 deletes disjunctions $L \sqcup L_1 \sqcup \cdots \sqcup L_m$ in a clause.
Since $L \sqsubseteq (L \sqcup L_1 \sqcup \cdots \sqcup L_m)$ hold, if $L \sqcup L_1 \sqcup \cdots \sqcup L_m$ is unsatisfiable, then $L$ is also unsatisfiable.
Inference rule A2 removes some $\forall R.F_1$, and $\exists R.F_2$ is converted to $\exists R.(F_1 \cup F_2)$. 
Since $F \sqsubseteq F \setminus \{CL'\}$ holds, if $F \setminus \{CL'\}$ is unsatisfiable, then $F$ is unsatisfiable.
Further, if $\forall R.F_1 \sqcap \exists R.F_2$ is unsatisfiable, then $F_1 \cup F_2\; (= F_1 \sqcap F_2)$ is unsatisfiable.
Hence, $\exists R.(F_1 \cup F_2) \; \left(= \exists R.(F_1 \sqcap F_2)\right)$ is also unsatisfiable.
By applying inference rule A3, the unit clause of some existential role concept $\{\exists R.F_1\}$ is deleted, and $F_1$ is added to $\mathcal{S}_i$.
If $\exists R.F_1$ is unsatisfiable, then $F_1$ is also unsatisfiable.

\subsection{Efficient Reasoning Algorithm for Clause Sets}

In this section, we improve the reasoning algorithm described in the previous section by exploiting the flat clause representation.
If a concept literal $L$ is selected by inference rule A1, the clause containing $L$ or $\overline{L}$ can be further simplified in the derivation steps.
Moreover, we attempt to avoid the branches caused by applying inference rule A2 to universal role concepts.

\begin{definition}[Inference Rules $\mathrm{A1}^+$ and $\mathrm{A2}^+$]
    The rules A1 and A2 introduced in Definition \ref{def:solve_rule} are replaced by the following inference rules $\mathrm{A1}^+$ and $\mathrm{A2}^+$:
    
    \begin{description}
        \item{$\mathrm{(A1^+)}$} if $L \in CL$ in $F$, then
        \begin{enumerate}[]
            \item for all $CL' \in F$ with $L \in CL'$, $CL' \to \{L\}$, and for all $CL' \in F$ with $\overline{L} \in CL'$, $CL' \to CL' \setminus \{\overline{L}\}$, and
        \end{enumerate}
        \item{$\mathrm{(A2^+)}$} if all clauses in $F$ are unit clauses and $\forall R.F_1 \in CL$ in $F$, then
        \begin{enumerate}[]
            \item $F \to F \setminus \{CL' \in F \mid \forall R.F_1 \in CL' \}$ and for all $\exists R.F_2 \in CL'$ in $F$, $\exists R.F_2 \to \exists R.(F_1 \cup F_2)$.
        \end{enumerate}
    \end{description}
\end{definition}

If a concept literal $L$ is selected from $CL$ by applying inference rule $\mathrm{A1}^+$, then for all clauses $CL'$ containing the literal $L$, the other concept literals are removed from $CL'$, and for all clauses $CL'$ containing the complementary literal $\overline{L}$, the complementary literal is removed from $CL'$.
That is, if $L$ is selected from a clause, all $CL'$ are converted as follows:
\begin{align*}
    L \sqcup L_1 \sqcup \ldots \sqcup L_m \, & \to L \\
    \overline{L} \sqcup L_1 \sqcup \ldots \sqcup L_m \, & \to \, L_1 \sqcup \ldots \sqcup L_m
\end{align*}
Inference rule $\mathrm{A2}^+$ is restricted by the condition that all clauses are unit clauses in order to avoid applying $\mathrm{A2}^+$ before $\mathrm{A1}^+$.

The unsatisfiability is inherited by applying inference rules $\mathrm{A1}^+$ and $\mathrm{A2}^+$.
Inference rule $\mathrm{A1}^+$ deletes disjunctions $L \sqcup L_1 \sqcup \cdots \sqcup L_m$ and $\overline{L} \sqcup L_1 \sqcup \cdots \sqcup L_m$ in all clauses.
Since $L \sqsubseteq (L \sqcup L_1 \sqcup \cdots \sqcup L_m)$ and $(L_1 \sqcup \cdots \sqcup L_m) \sqsubseteq (\overline{L} \sqcup L_1 \sqcup \cdots \sqcup L_m)$ hold, if $L \sqcup L_1 \sqcup \cdots \sqcup L_m$ and $\overline{L} \sqcup L_1 \sqcup \cdots \sqcup L_m$ are unsatisfiable, then $L$ and $L_1 \sqcup \cdots \sqcup L_m$ are also unsatisfiable, respectively.
Furthermore, inference rule $\mathrm{A2}^+$ derives the same clauses as inference rule A2.

\section{Completeness}

We denote by $rol(F)$ a set of all role names contained in a clause set $F$.
For example, $rol(\{\lnot \forall R_1.\exists R_2.C_1, \lnot C_2\}) = \{R_1, R_2\}$.
The set of subexpressions of a non-empty clause set $F$ is the smallest set such that
\begin{enumerate}
    \item $F \in sub(F)$,
    \item if $F' \in sub(F)$, then $F' \subseteq sub(F)$,
    \item if $CL \in sub(F)$, then $CL' \in sub(F)$ for all non-empty $CL' \subseteq CL$,
    \item if $\{\lnot A\} \in sub(F)$, then $\{A\} \in sub(F)$, and
    \item if $\{\forall R.F'\} \in sub(F)$ or $\{\exists R.F'\} \in sub(F)$, then $F' \in sub(F)$.
\end{enumerate}
For example, $sub(\{\{\forall R_1.\left\{\{\lnot A_1\}, \{A_2\}\right\}\}\})$ $=$ $\{\{\{\forall R_1.\left\{\{\lnot A_1\}, \{A_2\}\right\}\}\},$ $\{\forall R_1.\left\{\{\lnot A_1\}, \{A_2\}\right\}\},$ $\left\{\{\lnot A_1\}, \{A_2\}\right\},$ $\{\lnot A_1\}, \{A_2\},$ $\{A_1\}\}$.

\begin{definition}[CNF Tableau]
\label{def:tableau}
    Let $F$ be a clause set.
    A CNF tableau for $F$ is a tuple $T = (S, L, E)$ such that $S$ is a set of individuals, $L: S \to 2^{sub(F)}$ is a function from individuals to subexpressions in $sub(F)$, and $E: rol(F) \to 2^{S \times S}$ is a function from role names to pairs of individuals, and there exists some $s_0 \in S$ such that $F \in L(s_0)$.
    For all $s, t \in S$, it satisfies the following conditions:
    \begin{enumerate}
        \item if $\{L\} \in L(s)$, then $\{\overline{L}\} \notin L(s)$,
        \label{def:tableau_cond1_non_contradiction}
        \item if $F' \in L(s)$, then $F' \subseteq L(s)$,
        \label{def:tableau_cond2_recursive_clause_set}
        \item if $CL \in L(s)$, there exists some $L' \in CL$ such that $\{L'\} \in L(s)$,
        \label{def:tableau_cond3_recursive_literal_set}
        \item if $\{\forall R.F'\} \in L(s)$ and $(s, t) \in E(R)$, then $F' \in L(t)$,
        \label{def:tableau_cond4_universal}
        \item if $\{\exists R.F'\} \in L(s)$, then $(s, t) \in E(R)$, and there exists $t \in S$ such that $F' \in L(t)$, and
        \label{def:tableau_cond5_existential}
        \item if $\{\forall R.F_1\}, \{\exists R.F_2\} \in L(s)$, then $\{\exists R.F_1 \cup F_2\} \in L(s)$.
        \label{def:tableau_cond6_marge}
    \end{enumerate}
\end{definition}

\begin{lemma}
\label{lem:sat_on_exist_tableau}
    There exists a CNF tableau $T = (S, L, E)$ for a clause set $F$ if and only if $F$ is satisfiable.
\end{lemma}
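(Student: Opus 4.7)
My plan is to prove both directions by the usual tableau model-existence pattern: extract a model from a given tableau, and conversely label the domain of a given model to produce a tableau. In both directions the key step is a structural induction on the subexpressions in $sub(F)$, which I handle by treating a clause set as a conjunction of its clauses, a clause as a disjunction of its literals, and a singleton $\{L\}$ as the literal $L$ itself.

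For the ``if'' direction, suppose $T = (S, L, E)$ is a CNF tableau for $F$. I would build $\mathcal{I} = (\Delta^\mathcal{I}, \cdot^\mathcal{I})$ by setting $\Delta^\mathcal{I} = S$, $R^\mathcal{I} = E(R)$ for each $R \in rol(F)$, and $A^\mathcal{I} = \{s \in S \mid \{A\} \in L(s)\}$ for each concept name. By induction on $X \in sub(F)$ I would show that $X \in L(s)$ implies $s \in X^\mathcal{I}$. The base case for $\{A\}$ holds by definition, for $\{\lnot A\}$ it uses condition~\ref{def:tableau_cond1_non_contradiction}. The clause-set case uses condition~\ref{def:tableau_cond2_recursive_clause_set} to reduce to all member clauses, and the clause case uses condition~\ref{def:tableau_cond3_recursive_literal_set} to pick a literal singleton in $L(s)$. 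The $\forall R.F'$ and $\exists R.F'$ cases are exactly conditions~\ref{def:tableau_cond4_universal} and~\ref{def:tableau_cond5_existential}. Applied to $s_0$ and $X = F$ this gives $s_0 \in F^\mathcal{I}$, hence $F$ is satisfiable.

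For the ``only if'' direction, suppose $s_0 \in F^\mathcal{I}$ for some interpretation $\mathcal{I}$. I would extend $\cdot^\mathcal{I}$ to subexpressions by interpreting clauses and clause sets semantically, and define a candidate tableau by $S = \Delta^\mathcal{I}$, $E(R) = R^\mathcal{I}$, and $L(s) = \{X \in sub(F) \mid s \in X^\mathcal{I}\}$; then $F \in L(s_0)$ since $s_0 \in F^\mathcal{I}$. Verifying conditions~\ref{def:tableau_cond1_non_contradiction}--\ref{def:tableau_cond5_existential} is then routine: each follows from the classical semantics of negation, conjunction, disjunction, $\forall R$ and $\exists R$, combined with the closure clauses (2)--(5) of $sub(F)$, which guarantee that every subexpression I must certify as lying in $L(s)$ already belongs to $sub(F)$.

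The main obstacle I anticipate is condition~\ref{def:tableau_cond6_marge}. Given $\{\forall R.F_1\}, \{\exists R.F_2\} \in L(s)$, the semantic content is immediate: any $R$-successor witnessing $\exists R.F_2$ at $s$ also satisfies $F_1$, hence satisfies $F_1 \cup F_2$ (read as $F_1 \sqcap F_2$), so $s \in (\exists R.(F_1 \cup F_2))^\mathcal{I}$. The subtle point is membership of $\{\exists R.(F_1 \cup F_2)\}$ in $sub(F)$, which is not visibly provided by the stated closure rules. I would handle this by enlarging $sub(F)$ under the merge operation $(\forall R.F_1, \exists R.F_2) \mapsto \exists R.(F_1 \cup F_2)$ induced by rule $\mathrm{A2}$; since $F$ contains only finitely many universal and existential literals per role, the closure remains finite, the induction in the ``if'' direction extends without change (the new subexpressions are again handled by conditions~\ref{def:tableau_cond5_existential} and~\ref{def:tableau_cond2_recursive_clause_set}), and condition~\ref{def:tableau_cond6_marge} then falls out of the semantic argument above.
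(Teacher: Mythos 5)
Your proposal follows essentially the same route as the paper: the same interpretation $\Delta^\mathcal{I}=S$, $A^\mathcal{I}=\{s \mid \{A\}\in L(s)\}$, $R^\mathcal{I}=E(R)$ with the same structural induction for $(\Rightarrow)$, and the same labeling $L(s)=\{X \mid s\in X^\mathcal{I}\}$ over $E(R)=R^\mathcal{I}$ for $(\Leftarrow)$. Your fix for Condition~\ref{def:tableau_cond6_marge} --- closing under the merge $(\forall R.F_1,\exists R.F_2)\mapsto \exists R.(F_1\cup F_2)$ --- is exactly what the paper does by admitting the concepts $\{\exists R.F_1\cup F_2\}$ with $\{\forall R.F_1\},\{\exists R.F_2\}\in sub(F)$ into the range of $L$ in its $(\Leftarrow)$ construction, so the two arguments coincide.
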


\begin{proof}
    $(\Rightarrow)$ Let $T = (S, L, E)$ be a CNF tableau for a clause set $F$.
    Then, an interpretation $\mathcal{I} = (\Delta^\mathcal{I}, \cdot^\mathcal{I})$ of $F$ can be constructed as follows:
    \begin{gather*}
        \Delta^\mathcal{I} = S, \\
        \text{for each unit clause of } \{A\} \in sub(F) \text{, } A^\mathcal{I} = \{s \in S \mid \{A\} \in L(s)\}, \text{ and} \\
        \text{for each role name } R \in rol(F) \text{, } R^\mathcal{I} = E(R).
    \end{gather*}
    For $\mathcal{I}$, we prove by induction on the structure of a clause set that for all $C$ in $sub(F)$, if $C \in L(s)$, then $s \in C^\mathcal{I}$ holds.
    If $\{A\} \in L(s)$, then by the definition of $\mathcal{I}$, $s \in A^\mathcal{I}$ holds.
    If $\{\lnot A\} \in L(s)$, by Condition (\ref{def:tableau_cond1_non_contradiction}) of Definition \ref{def:tableau}, $\{A\} \notin L(s)$.
    By definition of $\mathcal{I}$, $s \notin A^\mathcal{I}$.
    Hence, $s \in (\lnot A)^\mathcal{I}$.
    If $CL \in L(s)$, then by Condition (\ref{def:tableau_cond3_recursive_literal_set}) of Definition \ref{def:tableau}, there exists some $L' \in CL$ such that $\{L'\} \in L(s)$.
    By the induction hypothesis, $s \in (L')^\mathcal{I}$ holds.
    Thus, $s \in CL^\mathcal{I}$.
    If $\{\forall R.F'\} \in L(s)$, then by Condition (\ref{def:tableau_cond4_universal}) of Definition \ref{def:tableau}, for any $t \in S$, if $(s, t) \in E(R)$, then $F' \in L(t)$.
    By the induction hypothesis, $t \in (F')^\mathcal{I}$ holds.
    Thus, for all $(s, t) \in E(R)$, $(s, t) \in R^\mathcal{I}$, so by Definition \ref{def:ALC_interpretation}, $s \in (\forall R.F')^\mathcal{I}$.
    If $\{\exists R.F'\} \in L(s)$, then by Condition (\ref{def:tableau_cond5_existential}) of Definition \ref{def:tableau}, there exists some $t \in S$ such that $(s, t) \in E(R)$ and $F' \in L(t)$.
    By the induction hypothesis, $t \in (F')^\mathcal{I}$ holds.
    Therefore, since $(s, t) \in R^\mathcal{I}$, by Definition \ref{def:ALC_interpretation}, $s \in (\exists R.F')^\mathcal{I}$.
    If $F' \in L(s)$, then by Condition (\ref{def:tableau_cond2_recursive_clause_set}) of Definition \ref{def:tableau}, $CL \in L(s)$ for all $CL \in F'$.
    By the induction hypothesis, $s \in CL^\mathcal{I}$, so $s \in (F')^\mathcal{I}$.

    Therefore, $F \in sub(F)$, and there exists $s_0$ such that $F \in L(s_0)$ by Definition \ref{def:tableau}, so $s_0 \in F^\mathcal{I}$.
    Hence, the interpretation $\mathcal{I}$ satisfies $F$.

    $(\Leftarrow)$ Assume that an interpretation $\mathcal{I} = (\Delta^\mathcal{I}, \cdot^\mathcal{I})$ satisfies a CNF concept $F$.
    Let us construct the tableau $T = (S, L, E)$ for $F$ such that
    \begin{gather*}
        S = \Delta^\mathcal{I}, \\
        \text{for each } s \in \Delta^\mathcal{I} \text{, } L(s) = \{C \in sub(F) \cup \{\{\exists R.F_1 \cup F_2\} \mid \{\forall R.F_1\}, \{\exists R.F_2\} \in sub(F)\} \mid s \in C^\mathcal{I}\} \text{, and} \\
        \text{for each role name } R \in rol(F) \text{, } E(R) = R^\mathcal{I}.
    \end{gather*}
    Since $F$ is satisfiable, $F^\mathcal{I} \neq \emptyset$.
    That is, there exists some $s \in F^\mathcal{I}$ such that $s \in \Delta^\mathcal{I}$.
    By definition of $T = (S, L, E)$ above, there exists some $s \in S$ such that $s \in F^\mathcal{I}$ and $F \in L(s)$.
    We show that $T$ satisfies the conditions of Definition \ref{def:tableau} for all $s \in S$.
    (\ref{def:tableau_cond1_non_contradiction}) If $\{L\} \in L(s)$ for a concept literal $L$, then $s \in L^\mathcal{I}$ from the construction of $T$.
    Therefore, $s \notin (\overline{L})^\mathcal{I}$, $\{\overline{L}\}\notin L(s)$ from the Definition \ref{def:ALC_interpretation}.
    (\ref{def:tableau_cond2_recursive_clause_set}) For a clause set $F' = \{CL_1, \ldots, CL_n\}$, if $F' \in L(s)$, then $s \in (F')^\mathcal{I}$.
    Thus, for any $CL_i \in F'$, $s \in {CL_i}^\mathcal{I}$, so $CL_i \in L(s)$.
    Therefore, $F' \subseteq L(s)$.
    (\ref{def:tableau_cond3_recursive_literal_set}) If $CL \in L(s)$ for a clause (literal set) $CL = \{L_1, \ldots, L_n\}$, then $s \in CL^\mathcal{I}$.
    Thus, there exists $L' \in CL$ and $s \in (L')^\mathcal{I}$, so $\{L'\} \in L(s)$.
    (\ref{def:tableau_cond4_universal}) If $\{\forall R.F'\} \in L(s)$, then $s \in (\forall R.F')^\mathcal{I}$, so $t \in (F')^\mathcal{I}$ for any $(s, t) \in R^\mathcal{I}$.
    Therefore, $F' \in L(t)$.
    (\ref{def:tableau_cond5_existential}) If $\{\exists R.F'\} \in L(s)$, then $s \in (\exists R.F')^\mathcal{I}$, so there exists some $(s, t) \in R^\mathcal{I}$ and $t \in (F')^\mathcal{I}$.
    Therefore, $F' \in L(t)$.
    (\ref{def:tableau_cond6_marge}) If $\{\forall R.F_1\}, \{\exists R.F_2\} \in L(s)$, from $s \in (\forall R.F_1)^\mathcal{I}, (\exists R.F_2)^\mathcal{I}$, there exists some $(s, t) \in R^\mathcal{I}$ such that $t \in {F_1}^\mathcal{I}, {F_2}^\mathcal{I}$, so $s \in (\exists R.F_1 \cup F_2)^\mathcal{I}$.
    Therefore, $\{\exists R.F_1 \cup F_2\} \in L(s)$. 
    So, $T$ is a CNF tableau for $F$ since $T$ satisfies the conditions of the Definition \ref{def:tableau}.
\end{proof}

Lemma~\ref{lem:sat_on_exist_tableau} shows that an interpretation $\mathcal{I}$ satisfying a clause set $F$ can be constructed from a CNF tableau $T$ for $F$.
Moreover, if a clause set $F$ is satisfiable, a tableau $T$ can be constructed from an interpretation $\mathcal{I}$ of $F$.

We show the termination, soundness, and completeness of the reasoning algorithm for deciding the satisfiability of a clause set.
The termination of the algorithm can be proved by the fact that the applications of inference rules are limited to finite steps.

\begin{theorem}[Termination]
    The reasoning algorithm with A1, A2, and A3 terminates.
    Also, the reasoning algorithm with $\mathrm{A1}^+$, $\mathrm{A2}^+$, and A3 terminates.
\end{theorem}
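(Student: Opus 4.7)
The plan is to exhibit a well-founded measure on families of clause sets that strictly decreases with every application of an inference rule, and then invoke the fact that a well-founded order admits no infinite descending chain. The measure I would use is a finite multiset of lexicographically ordered pairs: to each clause set $F \in \mathcal{S}_i$ I attach $\mu(F) = (\mathrm{md}(F), \sigma(F))$, where $\mathrm{md}(F)$ is the maximum nesting depth of role quantifiers occurring in $F$ and $\sigma(F)$ counts top-level clauses plus literal occurrences in $F$. The measure of the family $\mathcal{S}_i$ is the finite multiset $\{\mu(F) \mid F \in \mathcal{S}_i\}$, compared under the Dershowitz--Manna multiset extension of the lexicographic order on $\mathbb{N} \times \mathbb{N}$.

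First I would verify that A1, $\mathrm{A1}^+$, A2, and $\mathrm{A2}^+$ each strictly decrease $\mu(F)$ of the single affected clause set while leaving the rest of the family untouched. For A1 and $\mathrm{A1}^+$, collapsing a clause of length greater than one to a unit clause, or deleting a complementary literal from a clause, strictly drops $\sigma(F)$; since no new subconcepts are introduced, $\mathrm{md}(F)$ cannot grow. For A2 and $\mathrm{A2}^+$, the removal of every clause containing $\forall R.F_1$ strictly drops $\sigma(F)$, and the side effect $\exists R.F_2 \to \exists R.(F_1 \cup F_2)$ yields a concept of depth $1 + \max(\mathrm{md}(F_1), \mathrm{md}(F_2))$, which equals $\max(\mathrm{md}(\forall R.F_1), \mathrm{md}(\exists R.F_2)) \le \mathrm{md}(F)$, so $\mathrm{md}(F)$ is preserved. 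In every case $\mu(F)$ drops lexicographically.

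For A3 applied to a clause set $F$ with unit clause $\{\exists R.F_1\}$, I would observe that $F$ is replaced in the family by $F \setminus \{\{\exists R.F_1\}\}$, of depth at most $\mathrm{md}(F)$ and with strictly smaller $\sigma$, while the fresh clause set $F_1$ satisfies $\mathrm{md}(F_1) = \mathrm{md}(\exists R.F_1) - 1 < \mathrm{md}(F)$. Hence the pair $\mu(F)$ is replaced in the multiset by two strictly smaller pairs under the lexicographic order, and by the Dershowitz--Manna extension the multiset itself strictly decreases. Since the multiset extension of a well-founded order is well-founded, no infinite derivation is possible, and the identical argument covers both the A1/A2/A3 and the $\mathrm{A1}^+$/$\mathrm{A2}^+$/A3 variants.

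The main obstacle I foresee is the side effect of A2 and $\mathrm{A2}^+$: the rewrite $\exists R.F_2 \to \exists R.(F_1 \cup F_2)$ looks as if it could enlarge a clause set in a harmful way and threaten the depth measure. The crucial observation to confirm is that the new concept's depth never exceeds the depth already present in $F$, which reduces to the identity $\max(1 + \mathrm{md}(F_1),\, 1 + \mathrm{md}(F_2)) \le \mathrm{md}(F)$ used above. Once this is verified, the remainder of the proof is a routine appeal to the well-foundedness of the multiset ordering.
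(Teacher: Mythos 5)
Your proof is correct, but it is a more formal route than the paper takes. The paper disposes of termination in three sentences: A1 (or $\mathrm{A1}^+$) can only be applied finitely often because a clause set contains finitely many concept literals, and A2 (or $\mathrm{A2}^+$) and A3 each consume role structure, so they too stop after finitely many steps. You instead build an explicit well-founded measure --- the lexicographic pair $(\mathrm{md}(F),\sigma(F))$ lifted to families by the Dershowitz--Manna multiset extension --- and verify a strict decrease rule by rule. The underlying intuition (literal count shrinks under A1/$\mathrm{A1}^+$, quantifier depth shrinks under A3, A2's rewrite cannot raise the depth) is the same, but your version actually handles the points the paper glosses over: arbitrary interleavings of the three rules, and the fact that A3 enlarges the family $\mathcal{S}_i$ by adding $F_1$ while only the multiset measure, not the size of the family, decreases. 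Two small caveats to make your argument airtight: $\sigma(F)$ must count only top-level clauses and top-level literal occurrences (which is what your own treatment of A2 presupposes), since the rewrite $\exists R.F_2 \to \exists R.(F_1 \cup F_2)$ copies $F_1$ into every matching existential and could inflate a count of all nested occurrences while leaving $\mathrm{md}(F)$ unchanged; and, as in the paper, rule applications must be read as those that actually modify the family, so that a vacuous application of A1 to a clause that is already a unit clause does not give a non-decreasing step. With those readings fixed, your decrease claims hold and well-foundedness of the multiset order finishes the proof for both rule systems.
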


\begin{proof}
    Let $F$ be a clause set. Since the number of concept literals in $F$ is finite, inference rule A1 (or $\mathrm{A1}^+$) will eventually become inapplicable.
    Inference rules A2 (or $\mathrm{A2}^+$) and A3 delete at least one role name in a clause set, so it terminates in finite steps, depending on the number of role names in $F$.
    Thus, the satisfiability of $F$ can be decided in finite steps.
\end{proof}

\begin{theorem}[Soundness of Reasoning with A1, A2, and A3]
    If the reasoning algorithm with A1, A2, and A3 yields a complete and clash-free family of clause sets $\mathcal{S}_n$ for a clause set $F$, then $F$ is satisfiable.
\end{theorem}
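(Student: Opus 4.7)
The plan is to prove, by backward induction on the derivation indices $i = n, n-1, \ldots, 0$, the invariant that every clause set in $\mathcal{S}_i$ is satisfiable; specialising to $i = 0$ with $\mathcal{S}_0 = \{F\}$ then gives the theorem. An equivalent route via Lemma~\ref{lem:sat_on_exist_tableau} would construct a CNF tableau whose individuals are the clause sets of $\mathcal{S}_n$, with role edges recorded by A3 applications and labels accumulated along A1 and A2 rewrites, but the rule-by-rule semantic content is the same either way, so I take the inductive route.

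For the base case, completeness of $\mathcal{S}_n$ rules out all three inference rules on every $F' \in \mathcal{S}_n$. Inapplicability of A1 forces every clause to be a unit clause; with every clause already unit, inapplicability of A2 forbids any $\{\forall R.F_1\}$ and inapplicability of A3 forbids any $\{\exists R.F_1\}$. Hence each $F'$ consists solely of atomic unit clauses $\{A\}$ and $\{\lnot A\}$, and clash-freeness forbids complementary pairs, so the one-point interpretation with $A^\mathcal{I} = \{s\}$ exactly when $\{A\} \in F'$ satisfies $F'$. For the inductive step, suppose every clause set in $\mathcal{S}_{i+1}$ is satisfiable, and let $F \in \mathcal{S}_i$ be the clause set on which the rule producing $\mathcal{S}_{i+1}$ acts; the other members of $\mathcal{S}_i$ are unchanged and remain satisfiable. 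For A1, the rewrite $CL \to \{L\}$ gives $\{L\} \sqsubseteq CL$, so any model of the rewritten set is a model of $F$. For A3, the hypothesis supplies a model $\mathcal{I}_1$ of $F \setminus \{\{\exists R.F_1\}\}$ at some $s_1$ and a model $\mathcal{I}_2$ of $F_1$ at some $s_2$; the disjoint union of $\mathcal{I}_1$ and $\mathcal{I}_2$ together with the added edge $(s_1, s_2) \in R^\mathcal{I}$ is a model of $F$ at $s_1$, and the A3 precondition forbids any $\forall R$ literal in $F$, so the new edge breaks no universal constraint. For A2, a model of the modified clause set at the satisfying individual $s$ is converted into a model of $F$ by restricting the $R$-successors of $s$ to those lying in $F_1^\mathcal{I}$: this enforces $\forall R.F_1$, while every $\exists R.D$ in $F$ retains a witness because A2 has rewritten it to $\exists R.(F_1 \cup D)$ and the associated $F_1 \cap D$-witness survives the pruning, and every $\forall R.G$ in $F$ is preserved because its successor set only shrinks.

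The hardest step will be the A2 case, since the model modification must simultaneously enforce a new universal constraint and preserve all other role-related clauses. This works precisely because A2's rewrite has already absorbed $F_1$ into every co-role existential, guaranteeing that the pruned successor set still contains the required witnesses; the removed disjunctive clauses containing $\forall R.F_1$ are automatically satisfied because one of their disjuncts now holds. A minor bookkeeping point is that several universal literals on the same role may co-occur in $F$, but A2 processes them one at a time and the induction handles each application separately; similarly, non-determinism in the choice of literal for A1 or of existential for A3 is immaterial because we are given one particular derivation ending at a complete clash-free $\mathcal{S}_n$.
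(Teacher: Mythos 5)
Your overall route---backward induction showing that satisfiability propagates from the complete, clash-free leaf back to $\mathcal{S}_0=\{F\}$---differs from the paper, which instead assembles a CNF tableau out of the derivation sequence and invokes Lemma~\ref{lem:sat_on_exist_tableau}. Your base case and the A1 step are fine, but the model surgery in the A3 and A2 cases has a genuine gap: you treat the clauses at $s$ (resp.\ $s_1$) as if their satisfaction were unaffected by adding or deleting $R$-edges at that point, yet such an edit can change the extension of filler concepts at \emph{other} points when the model has cycles through $s$. Concretely, for A3 take $F=\{\{\exists R.\{\{\lnot A\}\}\},\{\exists Q.\{\{\forall R.\{\{A\}\}\}\}\}\}$; a legitimate model $\mathcal{I}_1$ of $F\setminus\{\{\exists R.\{\{\lnot A\}\}\}\}$ is a single point $s_1$ with $(s_1,s_1)\in Q^{\mathcal{I}_1}$, no $R$-edges, and $A$ false, since $s_1$ vacuously satisfies $\forall R.A$ and is its own $Q$-witness. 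Adding the edge $(s_1,s_2)$ to a $\lnot A$-point $s_2$ destroys exactly that witness, so the disjoint union does not satisfy $F$ at $s_1$ even though no $\forall R$ literal occurs at the top level of $F$. The same circularity undermines the A2 case: ``the $F_1\sqcap D$-witness survives the pruning'' presumes that deleting $R$-edges out of $s$ leaves the witness's satisfaction of $F_1\sqcap D$ (and the remaining successors' satisfaction of $G$ for the other $\forall R.G$ or $\forall R'.G$ literals) untouched, which can fail if those points' own existential requirements are met only via paths that re-enter $s$.

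The gap is repairable by a standard move you did not mention: before the surgery, unravel the given model at $s$ (resp.\ $s_1$) into a tree model. Since $\mathcal{ALC}$ concepts are invariant under unravelling, the root satisfies the same clause set, and in a tree no node other than the root can reach the root, so adding or deleting edges at the root leaves the evaluation of every concept at every other node unchanged. With that insertion your case analysis goes through: the A3 precondition then really does guarantee that the new edge breaks nothing at the root, and in A2 the pruned successor set still consists of points satisfying $F_1$ while the absorbed witnesses retain $F_1\sqcap F_2$, the untouched fillers on other roles are preserved, and the deleted clauses are satisfied because $\forall R.F_1$ now holds. Alternatively one can avoid model surgery altogether, as the paper does, by constructing a CNF tableau directly from the derivation (individuals are the clause-set indices, $E(R)$ the A3 parent--child pairs) and letting Lemma~\ref{lem:sat_on_exist_tableau} produce the model.
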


\begin{proof}
    Let $\mathcal{S}_n$ be a complete and clash-free family of clause sets derived from the initial family of clause sets $\mathcal{S}_0 = \{F\}$.
    Then, we prove that a CNF tableau $T = (S, L, E)$ for $F$ can be constructed from the nodes $\mathcal{S}_0, \mathcal{S}_1, \ldots, \mathcal{S}_n$ in a derivation tree where each $\mathcal{S}_{i+1}$ is derived from its parent node $\mathcal{S}_{i}$.

    We show the procedure for constructing a CNF tableau $T = (S, L, E)$ as follows.
    First, we define the set of ordinal numbers of clause sets in $\mathcal{S}_n$.
    \begin{itemize}
        \item $S = \{0, 1, \ldots, |\mathcal{S}_n|\}$
    \end{itemize}
    Let $\mathcal{S}_k = \{F_0, \ldots, F_m\}$ for each $k \leq n$ where each $F_{i+1}$ is added from $F_0, \ldots, F_i$ in the derivation tree (i.e., a clause set of $F_0, \ldots, F_i$ is a parent of $F_{i+1}$).
    Then, the $i$-th clause set in $\mathcal{S}_k$ is defined as follows.
    \begin{equation*}
        \mathcal{S}_k(i) = \left\{
        \begin{array}{cc}
            F_i & \text{if} \quad i \leq m \\
            \emptyset & \text{otherwise}
        \end{array} \right.
    \end{equation*}
    Second, we define the function $L: S \to 2^{sub(F) \cup \{\{\exists R.F_1 \cup F_2\} \mid \{\forall R.F_1\}, \{\exists R.F_2\} \in sub(F)\}}$ from $S$ to clauses, clause sets, and families of clause sets in $\mathcal{S}_0, \ldots, \mathcal{S}_n$ as follows:
    \begin{align*}
        L_F(i) &= \{\mathcal{S}_k(i) \in \{\mathcal{S}_0(i), \ldots, \mathcal{S}_n(i)\} \mid \mathcal{S}_k(i) \neq \emptyset\}, \\
        L_F^+(i) &= \{F' \subseteq \mathcal{S}_k(i) \mid F' \neq \emptyset \text{ and } \mathcal{S}_k(i) \in L_F(i)\}, \\
        L_{CL}(i) &= \{CL' \in \mathcal{S}_k(i) \mid \mathcal{S}_k(i) \in L_F(i)\}, \text{ and } \\
        L(i) &= L_F^+(i) \cup L_{CL}(i) \cup L_{\forall R.F}(i).
    \end{align*}
    where $L_{\forall R.F}(i)$ is the set of unit clauses $\{\forall R.F'\}$ of all universal role concepts $\forall R.F'$ to which inference rule A2 is applied in the $i$-th clause set $\mathcal{S}_k(i)$ for each $k \in \{0, \ldots, n\}$.
    Finally, we define $E: rol(F) \to 2^{S \times S}$ from a parent-child relationship of clause sets in $\mathcal{S}_n$.
    \begin{itemize}
        \item $E(R) = \{(i, j) \in S \times S \mid F_i \text{ is a parent of } F_j \text{ with respect to } R$\}
    \end{itemize}
    
    We show that $T$ satisfies the conditions of Definition \ref{def:tableau}.
    Let $\mathcal{S}_0 = \{F\}$ be the initial family of clause sets.
    By the above definition of $T = (S, L, E)$, there exists $0 \in S$ and $\mathcal{S}_0(0) = F \in L_F^+(0) \subseteq L(0)$.
    Condition (1): If $\{L\} \in L(i)$, then since $\mathcal{S}_n$ is clash-free and also $\mathcal{S}_1, \ldots, \mathcal{S}_{n-1}$ are clash-free, there is no $\mathcal{S}_k(i) \; (\text{for all } 0 \leq k \leq n)$ containing both $\{L\}, \{\overline{L}\}$.
    Therefore, $\{\overline{L}\}$ is not in $L(i)$.
    Condition (2): If $F' \in L(i)$, then $F' \neq \emptyset$ by definition, so $F' \in L_F^+(i)$, and every $CL' \in F' (\subseteq \mathcal{S}_k(i) \in L_F(i))$ is in $L_{CL}(i)$.
    Therefore, $F' \subseteq L(i)$.
    Condition (3): If $CL' \in L(i)$, then $CL' \in L_{CL}(i)$, and there is some $F' \in L_F(i)$ with $CL' \in F'$.
    Since $F' = \mathcal{S}_k(i)$, inference rule A1 (or $\mathrm{A1}^+$) can be applied to derive a unit clause $\{L\}$ of $L \in CL'$ such that $\{L\} \in \mathcal{S}_n(i)$.
    Hence, $\{L\} \in L_{CL}(i) \; (\subseteq L(i))$.
    If inferece rule A1 (or $\mathrm{A1}^+$) is not applied to $CL' \in F'$, then inference rule A2 is applied to some $\forall R.F_1 \in CL'$ with $\{\forall R.F_1\} \in L_{\forall R.F}(i)$.
    Thus, $\{L\} \in L(i)$ for some concept literal $L = \forall R.F_1$ in $CL'$.
    Condition (4): If $(i, j) \in E(R)$, then $\mathcal{S}_{k+1}(j) = F''$ is derived from $\mathcal{S}_k(i)$ (i.e., $\mathcal{S}_k(i)$ is the parent of $F''$) containing $\{\exists R.F''\}$ by inference rule A3.
    If $\{\forall R.F'\} \in L(i)$, then the condition of A3 implies that $\mathcal{S}_k(i)$ does not contain $\{\forall R.F'\}$.
    So, for some $0 \leq k' < k$, $\mathcal{S}_{k'}(i)$ contains a superset of $\{\forall R.F'\}$.
    Since $\{\exists R.F''\}$ in $\mathcal{S}_k(i)$ is derived from $\forall R.F'$ by inference rule A2, $F' \subseteq F''$ holds.
    Therefore, $F' \subseteq F'' \subseteq \mathcal{S}_{k+1}(j)$ and $\mathcal{S}_{k+1}(j) \in L_F(j)$.
    So, $F' \in L_F^+(j)$.
    Hence, $F' \in L(j)$.
    Condition (5): If $\{\exists R.F'\} \in L(i)$, then some $\mathcal{S}_k(i) \in L_F(i)$ contains $\{\exists R.F'\}$ since $\{\exists R.F'\} \in L_{CL}(i)$, and inference rule A3 is applied to $\{\exists R.F'\}$, leading to $\mathcal{S}_{k+1}(j) = F'$.
    Therefore, since $\mathcal{S}_k(i)$ is the parent of $F'$, $(i, j) \in E(R)$ and $F' \in L(j)$ holds.
    Condition (\ref{def:tableau_cond6_marge}): If $\{\forall R.F_1\}, \{\exists R.F_2\} \in L(i)$, then $\{\forall R.F_1\}, \{\exists R.F_2\} \in L_{CL}(i)$.
    So, some $\mathcal{S}_k(i) \in L_F(i)$ contains $\{\forall R.F_1\}, \{\exists R.F_2\}$, and inference rule A2 is applied to derive $\{\exists R.F_1 \cup F_2\} \in \mathcal{S}_{k+1}(i)$ from $\{\exists R.F_2\}$.
    Accordingly, $\{\exists R.F_1 \cup F_2\} \in L_F(i)$.
    Thus, $\{\exists R.F_1 \cup F_2\} \in L_F(i)$ ($\subseteq L(i)$).
    Hence, $T = (S, L, E)$ satisfies the conditions (\ref{def:tableau_cond2_recursive_clause_set})-(\ref{def:tableau_cond6_marge}).

    Therefore, since there exists a CNF tableau for $F$, it is satisfiable by Lemma \ref{lem:sat_on_exist_tableau}.
\end{proof}

\begin{theorem}[Completeness of Derivation with A1, A2, and A3]
\label{thrm:completeness_A1_A2_A3}
    If a clause set $F$ is satisfiable, the reasoning algorithm with A1, A2, and A3 yields a complete and clash-free family of clause sets.
\end{theorem}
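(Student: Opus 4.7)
The plan is to invoke Lemma~\ref{lem:sat_on_exist_tableau} in the direction satisfiability $\Rightarrow$ tableau existence, and then to use the resulting CNF tableau $T = (S, L, E)$ for $F$ (with $F \in L(s_0)$ for some $s_0 \in S$) to guide the nondeterministic choices in the derivation: which literal to keep when applying A1, which $\forall R.F_1$ to absorb when applying A2, and which existential to expand with A3. The bookkeeping will be a labelling $\sigma$ that assigns to every clause set $F' \in \mathcal{S}_i$ an individual $\sigma(F') \in S$, maintained under the invariant
\begin{equation*}
    \text{every } CL \in F' \text{ satisfies } CL \in L(\sigma(F')).
\end{equation*}
The base case $\sigma(F) = s_0$ is immediate from Condition~(\ref{def:tableau_cond2_recursive_clause_set}) of Definition~\ref{def:tableau}, which gives $F \subseteq L(s_0)$.

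I would then fix a disciplined order of rule applications. First I exhaust A1 on each non-unit clause $CL \in F'$: by the invariant $CL \in L(\sigma(F'))$, so Condition~(\ref{def:tableau_cond3_recursive_literal_set}) supplies some $L \in CL$ with $\{L\} \in L(\sigma(F'))$, which is exactly the literal I select; the resulting singleton $\{L\}$ is still in $L(\sigma(F'))$. Once all clauses are unit, I apply A2 to each $\{\forall R.F_1\} \in F'$: by the invariant both $\{\forall R.F_1\}$ and every $\{\exists R.F_2\}$ in $F'$ belong to $L(\sigma(F'))$, so Condition~(\ref{def:tableau_cond6_marge}) places each rewritten $\{\exists R.(F_1 \cup F_2)\}$ back in $L(\sigma(F'))$ (iterated when several universals share the same role name), and removing $\{\forall R.F_1\}$ preserves the invariant. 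Finally, A3 on a unit $\{\exists R.F'_1\}$ is routed by Condition~(\ref{def:tableau_cond5_existential}) to a tableau successor $t$ with $F'_1 \in L(t)$, and setting $\sigma(F'_1) = t$ restores the invariant for the clause set newly added to $\mathcal{S}_{i+1}$ via Condition~(\ref{def:tableau_cond2_recursive_clause_set}).

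Clash-freeness of every $\mathcal{S}_i$ then follows immediately from Condition~(\ref{def:tableau_cond1_non_contradiction}): coexisting $\{L\}$ and $\{\overline{L}\}$ in some $F'$ would both belong to $L(\sigma(F'))$, which is forbidden. No empty clause can appear either, since none of A1, A2, A3 shrinks a clause to zero literals. Combined with the already proved Termination Theorem this produces an $\mathcal{S}_n$ that is complete and clash-free. The main obstacle will be the interaction between A1 and A2: I must verify that applying A1 first does not mishandle a $\forall R.F_1$ appearing only inside a non-unit clause (this is fine, because if A1 guided by $T$ reduces such a clause to a singleton $\{L'\}$ with $L' \neq \forall R.F_1$ then $\forall R.F_1$ disappears from $F'$ and A2 is simply inapplicable to it), and I must carefully iterate Condition~(\ref{def:tableau_cond6_marge}) when several $\forall R.\cdot$'s coexist for the same role name, relying on the closure of $L(s)$ under successive absorption of universals into existentials that this condition provides.
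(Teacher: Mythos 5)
Your proposal is correct and follows essentially the same route as the paper: it invokes Lemma~\ref{lem:sat_on_exist_tableau} to obtain a CNF tableau, guides the choices of A1, A2 (applied only after clauses are unit), and A3 deterministically by the tableau, and maintains exactly the paper's invariant---your labelling $\sigma$ is the paper's map $\pi$ from clause-set indices to tableau individuals, with clash-freeness concluded from Condition~(\ref{def:tableau_cond1_non_contradiction}) and completeness from termination.
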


\begin{proof}
    Suppose that an interpretation $\mathcal{I} = (\Delta^\mathcal{I}, \cdot^\mathcal{I})$ satisfies $F$.
    By Lemma \ref{lem:sat_on_exist_tableau}, there exists a CNF tableau $T = (S, L, E)$ for $F$ such that $F \in L(s_0)$ for some $s_0 \in S$.
    
    A derivation tree is generated by applying inference rules A1, A2, and A3 to the initial family of clause set $\mathcal{S}_0 = \{F\}$.
    As a result, multiple complete families of clausal sets are derived from the branches by inference rules A1 and A2.
    By selecting a complete concept set $\mathcal{S}_n$ from them, we define the set of ordinal numbers of clausal sets in $S'_{[\mathcal{S}_n]}$.
    \begin{itemize}
        \item $S'_{[\mathcal{S}_n]} = \{0, \ldots, |\mathcal{S}_n|\}$
    \end{itemize}
    We obtain a sequence $\mathcal{S}_0, \mathcal{S}_1, \ldots, \mathcal{S}_n$ from the derivation tree where $\mathcal{S}_0$ is the root node, $\mathcal{S}_n$ is a leat node, and $\mathcal{S}_{i}$ is the parent node of $\mathcal{S}_{i+1}$.
    Let $\mathcal{S}_k = \{F_0, \ldots, F_m\}$ for each $k \leq n$ where each $F_{i+1}$ is added from $F_0, \ldots, F_i$ in the derivation tree (i.e., a clause set of $F_0, \ldots, F_i$ is a parent of $F_{i+1}$).
    Then, the $i$-th clause set in $\mathcal{S}_k$ is defined as follows.
    \begin{equation*}
        \mathcal{S}_k(i) = \left\{
        \begin{array}{cc}
            F_i & \text{if} \quad i \leq m \\
            \emptyset & \text{otherwise}
        \end{array} \right.
    \end{equation*}
    We define a function $\pi: S'_{[\mathcal{S}_n]} \to S$ to connect between $S'_{[\mathcal{S}_n]}$ and $S$ as follows:
    \begin{enumerate}
        \item $\pi(0) = s_0$, and
        \item if $\pi(i) = s$ and $\{\exists R.F'\} \in \mathcal{S}_k(i)$ such that $\mathcal{S}_{k+1}(j) = F'$ is added from $\mathcal{S}_k(i)$ by inference rule A3, then for some $t \in S$ with $F' \in L(t)$ and $(s, t) \in E(R)$, $\pi(j) = t$.
    \end{enumerate}
    To avoid the branching of A1 and A2, we introduce {\it determistic} inference rule A1 that uniquely determines a concept literal $L$ in some $CL \in \mathcal{S}_k(i)$ by following the CNF tableau $T$ with $L \in L(\pi(i))$.
    In addition, A1 is applied in preference to inference rule A2 whenever possible.
    Then, A2 is applied to only to unit clauses.
    It leads to a derivation path as a sequence $\mathcal{S}_0, \mathcal{S}_1, \ldots, \mathcal{S}_n$.
    If $\mathcal{S}_n$ is clash-free, then the reasoning algorithm yields a complete and clash-free family of clause sets.
    We show that $\mathcal{S}_0(i) \cup \cdots \cup \mathcal{S}_n(i) \subseteq L(\pi(i))$.
    Since each $\mathcal{S}_k(i)$ (except for $\mathcal{S}_0(0) = F$) is added by inference rules, all the clauses added to each $\mathcal{S}_k(i)$ have to be included in $L(\pi(i))$.
    We prove this by induction on the depth $k$ of a derivation tree.
    
    If $k = 0$, then $\mathcal{S}_0 = \{F\}$.
    For $\pi(0) = s_0$, $F \in L(\pi(0))$, so $F \subseteq L(\pi(0))$ from Condition (\ref{def:tableau_cond2_recursive_clause_set}) of Definition \ref{def:tableau}.
    Hence, $CL \in L(\pi(0))$ for all $CL \in F (= \mathcal{S}_0(0))$.
    
    If $k > 0$, then inference rules A1, A2, and A3 are applied to $\mathcal{S}_k$.
    
    If deterministic A1 is applied to $CL \in \mathcal{S}_k(i)(\in \mathcal{S}_k)$, then for some $L \in CL$ according to tableau $T$, $\{L\} \in \mathcal{S}_{k+1}(i)$.
    By the induction hypothesis, $CL \in L(\pi(i))$, and so by Condition (\ref{def:tableau_cond3_recursive_literal_set}) of Definition \ref{def:tableau} and the deterministic A1, $\{L\} \in L(\pi(i))$ holds.
    If A2 is applied to $\{\forall R.F_1\}, \{\exists R.F_2\} \in \mathcal{S}_k(i)$, then $\{\exists R.F_2\}$ is replaced with $\{\exists R.F_1 \cup F_2\}$ in $\mathcal{S}_{k+1}$ (i.e., $\{\exists R.F_1 \cup F_2\} \in \mathcal{S}_{k+1}(i)$).
    Since $\{\forall R.F_1\}, \{\exists R.F_2\} \in L(\pi(i))$ by the induction hypothesis, $\{\exists R.F_1 \cup F_2\} \in L(\pi(i))$ by Condition (\ref{def:tableau_cond6_marge}) of Definition \ref{def:tableau}.
    If A3 is applied to $\{\exists R.F_1\} \in \mathcal{S}_k(i)$, then $F_1$ is added to $\mathcal{S}_{k+1}$ (i.e., $\mathcal{S}_{k+1}(j) = F_1$).
    By the induction hypothesis, since $\{\exists R.F_1\} \in L(\pi(i))$, there exists $t \in S$ such that $(\pi(i), t) \in E(R)$ and $F_1 \in L(t)$ from Condition (\ref{def:tableau_cond5_existential}) of Definition \ref{def:tableau}.
    By definition of $\pi$, $\pi(j) = t$, so $F_1 \in L(\pi(j))$.
    By Condition (\ref{def:tableau_cond2_recursive_clause_set}) in Definition \ref{def:tableau}, $F_1 \subseteq L(\pi(j))$.
    Hence, since $\mathcal{S}_0(i) \cup \cdots \cup \mathcal{S}_n(i) \subseteq L(\pi(i))$ for all $i \in S'_{[\mathcal{S}_n]}$, from Condition (\ref{def:tableau_cond1_non_contradiction}) in Definition \ref{def:tableau}, $\mathcal{S}_0(i) \cup \cdots \cup \mathcal{S}_n(i)$ is clash-free.
    Therefore, the complete family of clause sets $\mathcal{S}_n$ is clash-free.
\end{proof}

Next, we show the completeness of the efficient reasoning algorithm with inference rules $\mathrm{A1^+}$, $\mathrm{A2^+}$, and A3.
We define a restricted set of CNF tableaux by revising Condition (\ref{def:tableau_cond3_recursive_literal_set}) of Definition \ref{def:tableau}.

\begin{definition}[Restricted CNF Tableau]
\label{def:restricted_tableau}
    Let $F$ be a clause set $F$.
    A restricted CNF tableau for $F$ is a tuple $T = (S, L, E)$ such that $S$ is a set of individuals, $L: S \to 2^{sub(F)}$ is a function from individuals to subexpressions in $sub(F)$, and $E: rol(F) \to 2^{S \times S}$ is a function from role names to pairs of individuals, and there exists some $s_0 \in S$ such that $F \in L(s_0)$.
    For all $s, t \in S$, it satisfies the following conditions: 
    \begin{enumerate}
        \item if $\{L\} \in L(s)$, then $\{\overline{L}\} \notin L(s)$,
        \label{def:restricted_tableau_cond1_non_contradiction}
        \item if $F' \in L(s)$, then $F' \subseteq L(s)$,
        \label{def:restricted_tableau_cond2_recursive_clause_set}
        \item if $CL \in L(s)$, there exists some $L' \in CL$ such that $\{L'\} \in L(s)$ and for all $CL' \in L(s)$, $CL' \setminus \{\overline{L'}\} \in L(s)$,\label{def:restricted_tableau_cond3_recursive_literal_set}
        \item if $\{\forall R.F'\} \in L(s)$ and $(s, t) \in E(R)$, then $F' \in L(t)$,
        \label{def:restricted_tableau_cond4_universal}
        \item if $\{\exists R.F'\} \in L(s)$, then $(s, t) \in E(R)$, and there exists $t \in S$ such that $F' \in L(t)$, and
        \label{def:restricted_tableau_cond5_existential}
        \item if $\{\forall R.F_1\}, \{\exists R.F_2\} \in L(s)$, then $\{\exists R.F_1 \cup F_2\} \in L(s)$.
        \label{def:restricted_tableau_cond6_marge}
    \end{enumerate}
\end{definition}

\begin{lemma}
\label{lem:exist_restricted_tableau_on_sat}
    There exists a restricted CNF tableau $T = (S, L, E)$ for a clause set $F$ if $F$ is satisfiable.
\end{lemma}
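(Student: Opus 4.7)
The plan is to adapt the $(\Leftarrow)$ direction of the proof of Lemma~\ref{lem:sat_on_exist_tableau} to the restricted setting. Given an interpretation $\mathcal{I} = (\Delta^\mathcal{I}, \cdot^\mathcal{I})$ satisfying $F$, I would take essentially the same tableau $T = (S, L, E)$ constructed there, namely $S = \Delta^\mathcal{I}$, $E(R) = R^\mathcal{I}$ for each $R \in rol(F)$, and
\[
L(s) = \{C \in sub(F) \cup \{\{\exists R.F_1 \cup F_2\} \mid \{\forall R.F_1\}, \{\exists R.F_2\} \in sub(F)\} \mid s \in C^\mathcal{I}\}.
\]
The existence of $s_0 \in S$ with $F \in L(s_0)$ and Conditions (\ref{def:restricted_tableau_cond1_non_contradiction}), (\ref{def:restricted_tableau_cond2_recursive_clause_set}), (\ref{def:restricted_tableau_cond4_universal}), (\ref{def:restricted_tableau_cond5_existential}), (\ref{def:restricted_tableau_cond6_marge}) of Definition~\ref{def:restricted_tableau} coincide with the corresponding conditions in Definition~\ref{def:tableau}, so the verifications from Lemma~\ref{lem:sat_on_exist_tableau} carry over verbatim.

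Only the strengthened Condition (\ref{def:restricted_tableau_cond3_recursive_literal_set}) needs new work. For $CL \in L(s)$ I would pick any $L' \in CL$ with $s \in (L')^\mathcal{I}$ (such an $L'$ exists because $s \in CL^\mathcal{I}$), obtaining $\{L'\} \in L(s)$ exactly as in Lemma~\ref{lem:sat_on_exist_tableau}. The new clause-reduction requirement then follows from a short semantic argument: for any $CL' \in L(s)$ containing $\overline{L'}$, the facts $s \in CL'^\mathcal{I} = \bigcup_{L_i \in CL'} (L_i)^\mathcal{I}$ and $s \notin (\overline{L'})^\mathcal{I}$ force the witnessing literal to lie in $CL' \setminus \{\overline{L'}\}$, so $s \in (CL' \setminus \{\overline{L'}\})^\mathcal{I}$. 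The case $\overline{L'} \notin CL'$ is trivial because then $CL' \setminus \{\overline{L'}\} = CL'$.

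The main obstacle I anticipate is showing that $CL' \setminus \{\overline{L'}\}$ genuinely lies in $sub(F)$, which would break down if the difference were empty. I would handle this by observing that the degenerate case $CL' = \{\overline{L'}\}$ cannot occur: $CL' \in L(s)$ would then give $\{\overline{L'}\} \in L(s)$, contradicting $\{L'\} \in L(s)$ via Condition (\ref{def:restricted_tableau_cond1_non_contradiction}). Hence $CL' \setminus \{\overline{L'}\}$ is a non-empty subset of $CL'$, and property~3 in the definition of $sub$ (closure under non-empty subsets of clauses) gives $CL' \setminus \{\overline{L'}\} \in sub(F)$. Combining this with $s \in (CL' \setminus \{\overline{L'}\})^\mathcal{I}$ yields $CL' \setminus \{\overline{L'}\} \in L(s)$, completing the verification of Condition (\ref{def:restricted_tableau_cond3_recursive_literal_set}) and hence the construction of $T$.
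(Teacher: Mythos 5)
Your proposal is correct and follows essentially the same route as the paper: reuse the tableau constructed in the $(\Leftarrow)$ direction of Lemma~\ref{lem:sat_on_exist_tableau} and check only the strengthened Condition~(\ref{def:restricted_tableau_cond3_recursive_literal_set}) by the same semantic argument, including the exclusion of the degenerate case $CL' = \{\overline{L'}\}$ via Condition~(\ref{def:restricted_tableau_cond1_non_contradiction}). Your extra remark that $CL' \setminus \{\overline{L'}\}$ lies in $sub(F)$ by closure under non-empty subclauses is a detail the paper leaves implicit, and it is a welcome clarification rather than a deviation.
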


\begin{proof}
    Assume that $F$ is satisfiable.
    Let $L(s)$ be as in the proof of Lemma \ref{lem:sat_on_exist_tableau}.
    Then, $L(s)$ satisfies the conditions in Definition \ref{def:tableau}.
    We prove that $L(s)$ satisfies the Condition (\ref{def:tableau_cond3_recursive_literal_set}) of Definition \ref{def:restricted_tableau}.
    If $CL \in L(s)$, then $s \in CL^\mathcal{I}$, so there exists some $L' \in CL$ and $s \in (L')^\mathcal{I}$.
    Therefore, $\{L'\} \in L(s)$.
    In addition, if $\overline{L'} \notin CL'$ then $CL' \setminus \{\overline{L'}\} = CL' \in L(s)$ for any $CL' \in L(s)$.
    If $\overline{L'} \in CL'$, then $CL' \neq \{\overline{L'}\}$ since $\{\overline{L'}\} \notin L(s)$ from the Condition (\ref{def:restricted_tableau_cond1_non_contradiction}).
    So, $s \notin (\overline{L'})^\mathcal{I}$ since $s \in (L')^\mathcal{I}$.
    Thus, there exists some $L'' \in CL'$ (except for $\overline{L'}$) and $s \in (L'')^\mathcal{I}$.
    Hence, $CL' \setminus \{\overline{L'}\} \in L(s)$ by the definition of $L(s)$ since $s \in (CL' \setminus \{\overline{L'}\})^\mathcal{I}$.
    Therefore, $L(s)$ satisfies the conditions of Definition \ref{def:restricted_tableau}.
\end{proof}

\begin{theorem}[Soundness of Reasoning with $\text{A1}^+$, $\text{A2}^+$, and A3]
    If the reasoning algorithm with $\mathrm{A1}^+$, $\mathrm{A2}^+$, and A3 yields a complete and clash-free family of clause sets $\mathcal{S}_n$ for a clause set $F$, then $F$ is satisfiable.
\end{theorem}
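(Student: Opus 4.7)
The plan is to mirror the earlier soundness proof for A1, A2, and A3, but this time to construct a \emph{restricted} CNF tableau (Definition~\ref{def:restricted_tableau}) from the complete and clash-free family $\mathcal{S}_n$. Because Condition~(\ref{def:restricted_tableau_cond3_recursive_literal_set}) of Definition~\ref{def:restricted_tableau} strictly implies Condition~(\ref{def:tableau_cond3_recursive_literal_set}) of Definition~\ref{def:tableau}, any restricted CNF tableau is also a CNF tableau, and Lemma~\ref{lem:sat_on_exist_tableau} will then deliver the satisfiability of $F$.

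First I would reuse the construction $(S, L, E)$ from the earlier soundness theorem essentially verbatim: take $S = \{0, 1, \ldots, |\mathcal{S}_n|\}$ indexing the clause sets appearing on the derivation path, define $\mathcal{S}_k(i)$ as the $i$-th clause set of $\mathcal{S}_k$, set $L(i) = L_F^+(i) \cup L_{CL}(i) \cup L_{\forall R.F}(i)$, and let $E(R)$ record the parent--child pairs generated by A3. Conditions~(\ref{def:restricted_tableau_cond1_non_contradiction}), (\ref{def:restricted_tableau_cond2_recursive_clause_set}), (\ref{def:restricted_tableau_cond4_universal}), (\ref{def:restricted_tableau_cond5_existential}), and (\ref{def:restricted_tableau_cond6_marge}) should go through exactly as in the A1/A2/A3 case: clash-freeness of every node on the path yields (1); (2) is immediate from the definition of $L_F^+$; and (4)--(6) transfer because $\mathrm{A2}^+$ performs the same merge as A2 (merely delayed until all clauses are unit clauses) while A3 is unchanged.

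The decisive new step is Condition~(\ref{def:restricted_tableau_cond3_recursive_literal_set}) of Definition~\ref{def:restricted_tableau}. Given $CL \in L(i)$, so $CL \in \mathcal{S}_k(i)$ for some $k$, completeness of $\mathcal{S}_n$ with respect to $\mathrm{A1}^+$ guarantees that either $CL$ is already a unit clause $\{L'\}$, or some application of $\mathrm{A1}^+$ fires on (a descendant of) $CL$, selecting a literal $L' \in CL$ and producing $\{L'\}$. What distinguishes $\mathrm{A1}^+$ from A1 is that the same application simultaneously rewrites every other clause $CL'' \in \mathcal{S}_k(i)$ containing $\overline{L'}$ into $CL'' \setminus \{\overline{L'}\}$, while leaving clauses that do not contain $\overline{L'}$ intact. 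Hence for every $CL' \in L(i)$, the clause $CL' \setminus \{\overline{L'}\}$ reappears in some later $\mathcal{S}_{k'}(i)$ and therefore lies in $L_{CL}(i) \subseteq L(i)$, which is exactly Condition~(\ref{def:restricted_tableau_cond3_recursive_literal_set}).

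I expect the main obstacle to be the careful bookkeeping needed to ensure that the ``erase $\overline{L'}$'' side-effect of $\mathrm{A1}^+$ propagates uniformly across \emph{all} clauses eventually appearing in $L(i)$, including those introduced later by subsequent $\mathrm{A1}^+$ applications or by $\mathrm{A2}^+$. Because $\mathrm{A2}^+$ is guarded to fire only when every clause is already a unit clause, no non-unit clause is created once $\mathrm{A1}^+$ has saturated, and a straightforward induction on the derivation length should close this gap. Once Condition~(\ref{def:restricted_tableau_cond3_recursive_literal_set}) is in hand, $(S, L, E)$ is a restricted CNF tableau, and therefore a CNF tableau, so Lemma~\ref{lem:sat_on_exist_tableau} concludes that $F$ is satisfiable.
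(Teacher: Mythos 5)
Your proposal follows essentially the same route as the paper's proof: it reuses the $(S, L, E)$ construction from the A1/A2/A3 soundness theorem verbatim, carries over Conditions (\ref{def:restricted_tableau_cond1_non_contradiction}), (\ref{def:restricted_tableau_cond2_recursive_clause_set}), (\ref{def:restricted_tableau_cond4_universal})--(\ref{def:restricted_tableau_cond6_marge}) unchanged, and establishes the new Condition (\ref{def:restricted_tableau_cond3_recursive_literal_set}) exactly as the paper does, namely via the side-effect of $\mathrm{A1}^+$ that simultaneously deletes the complementary literal $\overline{L'}$ from every other clause at the same index so that $CL' \setminus \{\overline{L'}\}$ reappears in a later $\mathcal{S}_{k'}(i)$ and hence lies in $L(i)$. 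The only difference is the final step, where you note that a restricted CNF tableau is in particular a CNF tableau and invoke Lemma~\ref{lem:sat_on_exist_tableau}, while the paper cites Lemma~\ref{lem:exist_restricted_tableau_on_sat}; your citation is in fact the cleaner one, since that lemma is stated only in the converse direction.
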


\begin{proof}
    Let $\mathcal{S}_n$ be a complete and clash-free family of clause sets derived from the initial family of clause sets $\mathcal{S}_0 = \{F\}$.
    Then, we prove that a restricted CNF tableau $T = (S, L, E)$ for $F$ can be constructed from the nodes $\mathcal{S}_0, \mathcal{S}_1, \ldots, \mathcal{S}_n$ in a derivation tree where each $\mathcal{S}_{i+1}$ is derived from its parent node $\mathcal{S}_{i}$.

    We show the procedure for constructing a restricted CNF tableau $T = (S, L, E)$ as follows.
    First, we define the set of ordinal numbers of clause sets in $\mathcal{S}_n$.
    \begin{itemize}
        \item $S = \{0, 1, \ldots, |\mathcal{S}_n|\}$
    \end{itemize}
    Let $\mathcal{S}_k = \{F_0, \ldots, F_m\}$ for each $k \leq n$ where each $F_{i+1}$ is added from $F_0, \ldots, F_i$ in the derivation tree (i.e., a clause set of $F_0, \ldots, F_i$ is a parent of $F_{i+1}$).
    Then, the $i$-th clause set in $\mathcal{S}_k$ is defined as follows.
    \begin{equation*}
        \mathcal{S}_k(i) = \left\{
        \begin{array}{cc}
            F_i & \text{if} \quad i \leq m \\
            \emptyset & \text{otherwise}
        \end{array} \right.
    \end{equation*}
    Second, we define the function $L: S \to 2^{sub(F) \cup \{\{\exists R.F_1 \cup F_2\} \mid \{\forall R.F_1\}, \{\exists R.F_2\} \in sub(F)\}}$ from $S$ to clauses, clause sets, and families of clause sets in $\mathcal{S}_0, \ldots, \mathcal{S}_n$ as follows:
    \begin{align*}
        L_F(i) &= \{\mathcal{S}_k(i) \in \{\mathcal{S}_0(i), \ldots, \mathcal{S}_n(i)\} \mid \mathcal{S}_k(i) \neq \emptyset\}, \\
        L_F^+(i) &= \{F' \subseteq \mathcal{S}_k(i) \mid F' \neq \emptyset \text{ and } \mathcal{S}_k(i) \in L_F(i)\}, \\
        L_{CL}(i) &= \{CL' \in \mathcal{S}_k(i) \mid \mathcal{S}_k(i) \in L_F(i)\}, \text{ and } \\
        L(i) &= L_F^+(i) \cup L_{CL}(i) \cup L_{\forall R.F}(i).
    \end{align*}
    where $L_{\forall R.F}(i)$ is the set of unit clauses $\{\forall R.F'\}$ of all universal role concepts $\forall R.F'$ to which inference rule $\mathrm{A2}^+$ is applied in the $i$-th clause set $\mathcal{S}_k(i)$ for each $k \in \{0, \ldots, n\}$.
    Finally, we define $E: rol(F) \to 2^{S \times S}$ from a parent-child relationship of clause sets in $\mathcal{S}_n$.
    \begin{itemize}
        \item $E(R) = \{(i, j) \in S \times S \mid F_i \text{ is a parent of } F_j \text{ with respect to } R$\}
    \end{itemize}
    
    We show that $T$ satisfies the conditions of Definition \ref{def:restricted_tableau}.
    Let $\mathcal{S}_0 = \{F\}$ be the initial family of clause sets.
    By the above definition of $T = (S, L, E)$, there exists $0 \in S$ and $\mathcal{S}_0(0) = F \in L_F^+(0) \subseteq L(0)$.
    Condition (1): If $\{L\} \in L(i)$, then since $\mathcal{S}_n$ is clash-free and also $\mathcal{S}_1, \ldots, \mathcal{S}_{n-1}$ are clash-free, there is no $\mathcal{S}_k(i) \; (\text{for all } 0 \leq k \leq n)$ containing both $\{L\}, \{\overline{L}\}$.
    Therefore, $\{\overline{L}\}$ is not in $L(i)$.
    Condition (2): If $F' \in L(i)$, then $F' \neq \emptyset$ by definition, so $F' \in L_F^+(i)$, and every $CL' \in F' (\subseteq \mathcal{S}_k(i) \in L_F(i))$ is in $L_{CL}(i)$.
    Therefore, $F' \subseteq L(i)$.
    Condition (3): If $CL' \in L(i)$, then $CL' \in L_{CL}(i)$, and there is some $F' \in L_F(i)$ with $CL' \in F'$.
    Since $F' = \mathcal{S}_k(i)$, inference rule $\text{A1}^+$ can be applied to derive a unit clause $\{L\}$ of $L \in CL'$ such that $\{L\} \in \mathcal{S}_n(i)$.
    Hence, $\{L\} \in L_{CL}(i) \; (\subseteq L(i))$.
    In addition, if $L \notin CL'$ for all $CL' \in L(i) \subseteq L_{CL}(i)$, then $\mathcal{S}_{k+1}(i)$ containing $CL' \setminus \{\overline{L}\}$ is derived from $\mathcal{S}_k(i)$ containing $\overline{L} \in CL'$ by inference rule $\mathrm{A1}^+$.
    Hence, $CL' \setminus \{\overline{L}\} \in L(i)$.
    Condition (4): If $(i, j) \in E(R)$, then $\mathcal{S}_{k+1}(j) = F''$ is derived from $\mathcal{S}_k(i)$ (i.e., $\mathcal{S}_k(i)$ is the parent of $F''$) containing $\{\exists R.F''\}$ by inference rule A3.
    If $\{\forall R.F'\} \in L(i)$, then the condition of A3 implies that $\mathcal{S}_k(i)$ does not contain $\{\forall R.F'\}$.
    So, for some $0 \leq k' < k$, $\mathcal{S}_{k'}(i)$ contains $\{\forall R.F'\}$.
    Since $\{\exists R.F''\}$ in $\mathcal{S}_k(i)$ is derived from $\forall R.F'$ by inference rule $\mathrm{A2}^+$, $F' \subseteq F''$ holds.
    Therefore, $F' \subseteq F'' \subseteq \mathcal{S}_{k+1}(j)$ and $\mathcal{S}_{k+1}(j) \in L_F(j)$.
    So, $F' \in L_F^+(j)$.
    Hence, $F' \in L(j)$.
    Condition (5): If $\{\exists R.F'\} \in L(i)$, then some $\mathcal{S}_k(i) \in L_F(i)$ contains $\{\exists R.F'\}$ since $\{\exists R.F'\} \in L_{CL}(i)$, and inference rule A3 is applied to $\{\exists R.F'\}$, leading to $\mathcal{S}_{k+1}(j) = F'$.
    Therefore, since $\mathcal{S}_k(i)$ is the parent of $F'$, $(i, j) \in E(R)$ and $F' \in L(j)$ holds.
    Condition (\ref{def:restricted_tableau_cond6_marge}): If $\{\forall R.F_1\}, \{\exists R.F_2\} \in L(i)$, then $\{\forall R.F_1\}, \{\exists R.F_2\} \in L_{CL}(i)$.
    So, some $\mathcal{S}_k(i) \in L_F(i)$ contains $\{\forall R.F_1\}, \{\exists R.F_2\}$, and inference rule $\mathrm{A2}^+$ is applied to derive $\{\exists R.F_1 \cup F_2\} \in \mathcal{S}_{k+1}(i)$ from $\{\exists R.F_2\}$.
    Accordingly, $\{\exists R.F_1 \cup F_2\} \in L_F(i)$.
    Thus, $\{\exists R.F_1 \cup F_2\} \in L_F(i)$ ($\subseteq L(i)$).
    Hence, $T = (S, L, E)$ satisfies the conditions (\ref{def:restricted_tableau_cond2_recursive_clause_set})-(\ref{def:restricted_tableau_cond6_marge}).

    Therefore, since there exists a restricted CNF tableau for $F$, it is satisfiable by Lemma \ref{lem:exist_restricted_tableau_on_sat}.
\end{proof}

\begin{theorem}[Completeness of Derivation with $\text{A1}^+$, $\text{A2}^+$, and A3]
    If a clause set $F$ is satisfiable, the reasoning algorithm with $\mathrm{A1}^+$, $\mathrm{A2}^+$, and A3 yields a complete and clash-free family of clause sets.
\end{theorem}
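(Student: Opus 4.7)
The plan is to mirror the structure of Theorem \ref{thrm:completeness_A1_A2_A3}, but to replace every appeal to Lemma \ref{lem:sat_on_exist_tableau} with an appeal to Lemma \ref{lem:exist_restricted_tableau_on_sat}, so that the side effect of $\mathrm{A1}^+$ on clauses containing $\overline{L}$ is tracked by the strengthened Condition (\ref{def:restricted_tableau_cond3_recursive_literal_set}) of the restricted CNF tableau. Starting from an interpretation $\mathcal{I}$ satisfying $F$, I would invoke Lemma \ref{lem:exist_restricted_tableau_on_sat} to get a restricted CNF tableau $T = (S, L, E)$ with $F \in L(s_0)$ for some $s_0 \in S$, and then build a single deterministic derivation path $\mathcal{S}_0, \mathcal{S}_1, \ldots, \mathcal{S}_n$ guided by $T$.

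For the derivation strategy I would, as in the previous completeness proof, force $\mathrm{A1}^+$ to be deterministic by requiring that the literal $L \in CL$ chosen at clause set $\mathcal{S}_k(i)$ be the unique $L$ such that $\{L\} \in L(\pi(i))$, whose existence is guaranteed by Condition (\ref{def:restricted_tableau_cond3_recursive_literal_set}) of Definition \ref{def:restricted_tableau}. I would apply $\mathrm{A1}^+$ in preference to $\mathrm{A2}^+$, so that when $\mathrm{A2}^+$ fires every clause is already a unit clause, matching its side condition exactly. The map $\pi : S'_{[\mathcal{S}_n]} \to S$ is defined as in Theorem \ref{thrm:completeness_A1_A2_A3}: $\pi(0) = s_0$, and whenever $\mathrm{A3}$ spawns a new clause set $\mathcal{S}_{k+1}(j) = F'$ from some $\{\exists R.F'\} \in \mathcal{S}_k(i)$, I pick $t \in S$ with $(\pi(i),t) \in E(R)$ and $F' \in L(t)$ (which exists by Condition (\ref{def:restricted_tableau_cond5_existential})) and set $\pi(j) = t$.

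The heart of the proof is the induction on derivation depth showing $\mathcal{S}_0(i) \cup \cdots \cup \mathcal{S}_n(i) \subseteq L(\pi(i))$. The base case is immediate from $F \in L(s_0)$ together with Condition (\ref{def:restricted_tableau_cond2_recursive_clause_set}). For the inductive step, the $\mathrm{A2}^+$ case uses Condition (\ref{def:restricted_tableau_cond6_marge}) to place $\{\exists R.F_1 \cup F_2\}$ in $L(\pi(i))$, and the $\mathrm{A3}$ case uses Condition (\ref{def:restricted_tableau_cond5_existential}) together with the definition of $\pi$, exactly as before. The $\mathrm{A1}^+$ case now has two subcases: a clause $CL''$ containing the chosen literal $L$ collapses to $\{L\}$, which lies in $L(\pi(i))$ by the deterministic choice of $L$; a clause $CL'$ containing $\overline{L}$ is replaced by $CL' \setminus \{\overline{L}\}$, and this is exactly the conclusion that has been built into Condition (\ref{def:restricted_tableau_cond3_recursive_literal_set}) of Definition \ref{def:restricted_tableau}. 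Once the inclusion is established, Condition (\ref{def:restricted_tableau_cond1_non_contradiction}) forbids any $\mathcal{S}_k(i)$ from containing both $\{L\}$ and $\{\overline{L}\}$, so the derived complete family $\mathcal{S}_n$ is clash-free.

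The main obstacle I anticipate is precisely this second subcase of $\mathrm{A1}^+$: verifying that the \emph{simultaneous} simplification of every clause containing $\overline{L}$ — which is what distinguishes $\mathrm{A1}^+$ from $\mathrm{A1}$ — does not leave the tableau-guided inclusion. The restricted tableau has been tailored so that this closure under removal of $\overline{L'}$ holds at every individual, so the tension disappears, but one must be careful to apply Condition (\ref{def:restricted_tableau_cond3_recursive_literal_set}) uniformly over \emph{all} $CL' \in L(\pi(i))$, not just the clause from which $L$ was originally selected, and to check that $\mathrm{A1}^+$'s transformation of $CL'$ to $CL' \setminus \{\overline{L}\}$ coincides with the clause promised by that condition. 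No other subtlety should arise, and termination (already established) guarantees that the inductive construction terminates at a complete $\mathcal{S}_n$.
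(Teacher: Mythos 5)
Your proposal is correct and follows essentially the same route as the paper's own proof: it invokes Lemma \ref{lem:exist_restricted_tableau_on_sat} to obtain a restricted CNF tableau, guides a deterministic $\mathrm{A1}^+$ (applied before $\mathrm{A2}^+$) and the map $\pi$ by that tableau, proves $\mathcal{S}_0(i) \cup \cdots \cup \mathcal{S}_n(i) \subseteq L(\pi(i))$ by induction on derivation depth, and handles the $\overline{L}$-removal subcase of $\mathrm{A1}^+$ via the strengthened Condition (\ref{def:restricted_tableau_cond3_recursive_literal_set}), concluding clash-freeness from Condition (\ref{def:restricted_tableau_cond1_non_contradiction}). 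The only cosmetic difference is your phrase ``the unique $L$'' — Condition (\ref{def:restricted_tableau_cond3_recursive_literal_set}) only guarantees existence of such a literal, not uniqueness, but since the deterministic rule simply fixes one such choice, this does not affect the argument.
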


\begin{proof}
    Suppose that an interpretation $\mathcal{I} = (\Delta^\mathcal{I}, \cdot^\mathcal{I})$ satisfies $F$.
    By Lemma \ref{lem:exist_restricted_tableau_on_sat}, there exists a restricted CNF tableau $T = (S, L, E)$ for $F$ such that $F \in L(s_0)$ for some $s_0 \in S$.
    
    A derivation tree is generated by applying inference rules $\mathrm{A1}^+$, $\mathrm{A2}^+$, and A3 to the initial family of clause set $\mathcal{S}_0 = \{F\}$.
    As a result, multiple complete families of clausal sets are derived from the branches by inference rule $\mathrm{A1}^+$.
    By selecting a complete concept set $\mathcal{S}_n$ from them, we define the set of ordinal numbers of clausal sets in $S'_{[\mathcal{S}_n]}$.
    \begin{itemize}
        \item $S'_{[\mathcal{S}_n]} = \{0, \ldots, |\mathcal{S}_n|\}$
    \end{itemize}
    We obtain a sequence $\mathcal{S}_0, \mathcal{S}_1, \ldots, \mathcal{S}_n$ from the derivation tree where $\mathcal{S}_0$ is the root node, $\mathcal{S}_n$ is a leat node, and $\mathcal{S}_{i}$ is the parent node of $\mathcal{S}_{i+1}$.
    Let $\mathcal{S}_k = \{F_0, \ldots, F_m\}$ for each $k \leq n$ where each $F_{i+1}$ is added from $F_0, \ldots, F_i$ in the derivation tree (i.e., a clause set of $F_0, \ldots, F_i$ is a parent of $F_{i+1}$).
    Then, the $i$-th clause set in $\mathcal{S}_k$ is defined as follows.
    \begin{equation*}
        \mathcal{S}_k(i) = \left\{
        \begin{array}{cc}
            F_i & \text{if} \quad i \leq m \\
            \emptyset & \text{otherwise}
        \end{array} \right.
    \end{equation*}
    We define a function $\pi: S'_{[\mathcal{S}_n]} \to S$ to connect between $S'_{[\mathcal{S}_n]}$ and $S$ as follows:
    \begin{enumerate}
        \item $\pi(0) = s_0$, and
        \item if $\pi(i) = s$ and $\{\exists R.F'\} \in \mathcal{S}_k(i)$ such that $\mathcal{S}_{k+1}(j) = F'$ is added from $\mathcal{S}_k(i)$ by inference rule A3, then for some $t \in S$ with $F' \in L(t)$ and $(s, t) \in E(R)$, $\pi(j) = t$.
    \end{enumerate}
    To avoid the branching of $\mathrm{A1}^+$, we introduce {\it determistic} inference rule $\mathrm{A1}^+$ that uniquely determines a concept literal $L$ in some $CL \in \mathcal{S}_k(i)$ by following the restricted CNF tableau $T$ with $L \in L(\pi(i))$.
    Inference rule $\mathrm{A2}^+$ only applies to a set of unit clauses, unlike inference rule A2.
    It leads to a derivation path as a sequence $\mathcal{S}_0, \mathcal{S}_1, \ldots, \mathcal{S}_n$.
    If $\mathcal{S}_n$ is clash-free, then the reasoning algorithm yields a complete and clash-free family of clause sets.
    We show that $\mathcal{S}_0(i) \cup \cdots \cup \mathcal{S}_n(i) \subseteq L(\pi(i))$.
    Since each $\mathcal{S}_k(i)$ (except for $\mathcal{S}_0(0) = F$) is added by inference rules, all the clauses added to each $\mathcal{S}_k(i)$ have to be included in $L(\pi(i))$.
    We prove this by induction on the depth $k$ of a derivation tree.
    
    If $k = 0$, then $\mathcal{S}_0 = \{F\}$.
    For $\pi(0) = s_0$, $F \in L(\pi(0))$, so $F \subseteq L(\pi(0))$ from Condition (\ref{def:restricted_tableau_cond2_recursive_clause_set}) of Definition \ref{def:restricted_tableau}.
    Hence, $CL \in L(\pi(0))$ for all $CL \in F (= \mathcal{S}_0(0))$.
    
    If $k > 0$, then inference rules $\mathrm{A1}^+$, $\mathrm{A2}^+$, and A3 are applied to $\mathcal{S}_k$.
    
    If deterministic $\mathrm{A1}^+$ is applied to $CL \in \mathcal{S}_k(i)(\in \mathcal{S}_k)$, then for some $L \in CL$ according to restricted CNF tableau $T$, $\{L\} \in \mathcal{S}_{k+1}(i)$.
    By the induction hypothesis, $CL \in L(\pi(i))$, and so by Condition (\ref{def:restricted_tableau_cond3_recursive_literal_set}) of Definition \ref{def:restricted_tableau} and the deterministic $\mathrm{A1}^+$, $\{L\} \in L(\pi(i))$ holds.
    In addition, $CL' \setminus \{\overline{L}\} \in \mathcal{S}_{k+1}(i)$ for all $CL' \in \mathcal{S}_k(i)$ such that $L \notin CL'$.
    By the induction hypothesis, $CL' \in L(\pi(i))$.
    By Condition (\ref{def:restricted_tableau_cond3_recursive_literal_set}) of Definition \ref{def:restricted_tableau}, $CL' \setminus \{\overline{L}\} \in L(\pi(i))$.
    
    If $\mathrm{A2}^+$ and A3 are applied, they are proved as well as Theorem \ref{thrm:completeness_A1_A2_A3}.
    Hence, since $\mathcal{S}_0(i) \cup \cdots \cup \mathcal{S}_n(i) \subseteq L(\pi(i))$ for all $i \in S'_{[\mathcal{S}_n]}$, from Condition (\ref{def:restricted_tableau_cond1_non_contradiction}) in Definition \ref{def:restricted_tableau}, $\mathcal{S}_0(i) \cup \cdots \cup \mathcal{S}_n(i)$ is clash-free.
    Therefore, the complete family of clause sets $\mathcal{S}_n$ is clash-free.
\end{proof}

\section{Example of Reasoning}

We provide some examples concerning deciding the satisfiability of a CNF concept using the reasoning algorithm with inference rules A1, A2, and A3.
In addition, we present an example of efficient derivation by the reasoning algorithm with inference rules $\mathrm{A1}^+$, $\mathrm{A2}^+$, and A3.

\begin{example}[CNF Concepts] We consider the following $\mathcal{ALC}$ concept $F$:
    \begin{align*}
        F = &(Animal \sqcup (Black \sqcap \forall hasPart.Small)) \sqcap \\
        &(\lnot Animal \sqcup \exists hasPart.(Leg \sqcap \lnot Small)) \sqcap \lnot (\exists hasPart.Leg \sqcap \exists hasPart.Wing)
    \end{align*}
    
    As explained in Section \ref{sec:reason_for_clause_set}, the concept $F$ is transformed into the conjunctive normal form as follows:
    \begin{align*}
        \mathrm{CNF}(F) = &(Animal \sqcup Black) \sqcap (Animal \sqcup \forall hasPart.Small) \sqcap \\
        &(\lnot Animal \sqcup \exists hasPart.(Leg \sqcap \lnot Small)) \sqcap (\forall hasPart.\lnot Leg \sqcup \forall hasPart.\lnot Wing)
    \end{align*}
\end{example}

\begin{example}[Derivation with A1, A2, and A3]
    Let $A$, $B$, $S$, $L$, $W$, and $h$ be concept names $Animal$, $Black$, $Small$, $Leg$, $Wing$, and a role name $hasPart$, respectively.
    A derivation tree from the initial family of clausal sets $\mathcal{S}_0 = \{\mathrm{CNF}(F)\}$ to $\mathcal{S}_{10}$ is shown in Figure \ref{fig:solve_example_A1_A2}.

    For the first clause $Animal \sqcup Black~(= \{A, B\})$ of the concept $\mathrm{CNF}(F) \in \mathcal{S}_0$, $\mathcal{S}_1 = \{F_1\}$ is derived by applying inference rule A1 for the concept literal $Animal(= A)$ (ex-1).
    \begin{align*}
        F_1 = &Animal \sqcap (Animal \sqcup \forall hasPart.Small) \sqcap \\
        &(\lnot Animal \sqcup \exists hasPart.(Leg \sqcap \lnot Small)) \sqcap (\forall hasPart.\lnot Leg \sqcup \forall hasPart.\lnot Wing)
    \end{align*}
    
    For the second clause $Animal \sqcup \forall hasPart.Small~(= \{A, \forall h.\{\{S\}\}\})$ of $F_1 \in \mathcal{S}_1$, $\mathcal{S}_2 = \{F_2\}$ is derived by applying rule A1 for $Animal~(= A)$ (ex-2).
    \begin{align*}
        F_2 &= Animal \sqcap Animal \sqcap (\lnot Animal \sqcup \exists hasPart.(Leg \sqcap \lnot Small)) \sqcap (\forall hasPart.\lnot Leg \sqcup \forall hasPart.\lnot Wing) \\
        &= Animal \sqcap (\lnot Animal \sqcup \exists hasPart.(Leg \sqcap \lnot Small)) \sqcap (\forall hasPart.\lnot Leg \sqcup \forall hasPart.\lnot Wing)
    \end{align*}
    
    For the second clause $\lnot Animal \sqcup \exists hasPart.(Leg \sqcap \lnot Small)~(= \{\lnot A, \exists h.\{\{L\}, \{\lnot S\}\}\})$ of $F_2 \in \mathcal{S}_2$, $\mathcal{S}_3 = \{F_3\}$ is derived by applying inference rule A1 for $\lnot Animal~(= \lnot A)$ (ex-3).
    \begin{align*}
        F_3 = Animal \sqcap \lnot Animal \sqcap (\forall hasPart.\lnot Leg \sqcup \forall hasPart.\lnot Wing)
    \end{align*}
    
    In this case, $F_3$ is unsatisfiable because it contains a clash (i.e., $Animal$ and its negation $ \lnot Animal$).
    Hence, another concept literal occurring in $\mathcal{S}_2$ is selected in an application of inference rule A1.
    That is, for the second clause $\lnot Animal \sqcup \exists hasPart.(Leg \sqcap \lnot Small)~(= \{\lnot A, \exists h.\{\{L\}, \{\lnot S\}\}\})$ of $F_2 \in \mathcal{S}_2$, $\mathcal{S}_4 = \{F_4\}$ is derived by applying inference rule A1 for the concept literal $\exists hasPart.(Leg \sqcap \lnot Small)~(= \exists h.\{\{L\}, \{\lnot S\}\})$ (ex-4).
    \begin{align*}
        F_4 &= Animal \sqcap \exists hasPart.(Leg \sqcap \lnot Small) \sqcap (\forall hasPart.\lnot Leg \sqcup \forall hasPart.\lnot Wing)
    \end{align*}
    
    For the third clause $\forall hasPart.\lnot Leg \sqcup \forall hasPart.\lnot Wing~(= \{\forall h.\{\{\lnot L\}\}, \forall h.\{\{\lnot W\}\}\})$ of $F_4 \in \mathcal{S}_4$, $\mathcal{S}_5 = \{F_5\}$ is derived by applying inference rule A1 for the concept literal $\forall hasPart.\lnot Leg~(= \forall h.\{\{\lnot L\}\})$ (ex-5).
    \begin{equation*}
        F_5 = Animal \sqcap \exists hasPart.(Leg \sqcap \lnot Small) \sqcap \forall hasPart.\lnot Leg
    \end{equation*}
    
    For the third clause $\forall hasPart.\lnot Leg~(= \forall h.\{\{\lnot L\}\})$ of $F_5 \in \mathcal{S}_5$, $\mathcal{S}_6 = \{F_6\}$ is derived by applying inference rule A2 for all existential role concepts (i.e., $\exists hasPart.(\lnot Leg \sqcap Leg \sqcap \lnot Small)$) in $F_5$ (ex-6).
    \begin{equation*}
        F_6 = Animal \sqcap \exists hasPart.(\lnot Leg \sqcap Leg \sqcap \lnot Small)
    \end{equation*}

    For the second clause $\exists hasPart.(\lnot Leg \sqcap Leg \sqcap \lnot Small)~(= \{\exists h.\{\{\lnot L\}, \{L\}, \{\lnot S\}\}\} )$ in $F_6$ of $\mathcal{S}_6$, $\mathcal{S}_7 = \{F_7, F_8\}$ is derived by applying inference rule A3 for $F_6$ where $F_6$ is the parent of $F_8$ (ex-7).
    \begin{align*}
        F_7 &= Animal \\
        F_8 &= \lnot Leg \sqcup Leg \sqcap \lnot Small
    \end{align*}
    
    In this case, $F_8$ is unsatisfiable because it contains a clash (i.e., $Leg$ and its negation $\lnot Leg$).
    So, another concept literal occurring in $\mathcal{S}_4$ is selected in an application of inference rule A1.
    That is, for the third clause $\forall hasPart.\lnot Leg \sqcup \forall hasPart.\lnot Wing~(= \{\forall h.\{\{\lnot L\}\},$ $\forall h.\{\{\lnot W\}\}\})$ of $F_4 \in \mathcal{S}_4$, $\mathcal{S}_8 = \{F_9\}$ is derived by applying rule A1 for the concept literal $\forall hasPart.\lnot Wing~(= \forall h.\{\{\lnot W\}\})$ (ex-8).
    \begin{align*}
        F_9 &= Animal \sqcap \exists hasPart.(Leg \sqcap \lnot Small) \sqcap \forall hasPart.\lnot Wing
    \end{align*}
    
    For the third clause $\forall hasPart.\lnot Wing~(= \{\forall h.\{\{\lnot W\}\}\})$ of $F_9 \in \mathcal{S}_8$, $\mathcal{S}_9 = \{F_{10}\}$ is derived by applying rule A2 for all existential role concepts (i.e., $\exists hasPart.(Leg \sqcap \lnot Small)~(= \exists h.\{ \{L\}, \{\lnot S\}\})$) (ex-9).
    \begin{align*}
        F_{10} &= Animal \sqcap \exists hasPart.(\lnot Wing \sqcap Leg \sqcap \lnot Small)
    \end{align*}
    
    For the second clause $\exists hasPart.(\lnot Wing \sqcap Leg \sqcap \lnot Small)~(= \exists h.\{\{\lnot W\}, \{L\}, \{\lnot S\}\})$ of $F_{10} \in \mathcal{S}_9$, $\mathcal{S}_{10} = \{F_{11}, F_{12}\}$ is derived by applying inference rule A3 for $F_{10}$ where $F_{10}$ is the parent of $F_{12}$ (ex-10).
    \begin{align*}
        F_{11} &= Animal \\
        F_{12} &= \lnot Wing \sqcap Leg \sqcap \lnot Small
    \end{align*}
    
    In this case, $\mathcal{S}_{10} = \{F_{11}, F_{12}\}$ is complete and clash-free.
    So we can decide that the concept $F$ is satisfiable.
    
    \begin{figure}[htbp]
        \centering
        \begin{tikzpicture}[node/.style={circle,draw}]
            \node [node] at (0, 0) (S0) {$\mathcal{S}_0$};
            \node [right=0cm of S0] (S0_t) {$= \{\{\{A, B\}, \{A, \forall h.\{\{S\}\}\}, \{\lnot A, \exists h.\{\{L\}, \{\lnot S\}\}\}, \{\forall h.\{\{\lnot L\}\}, \forall h.\{\{\lnot W\}\}\}\}\}$};
            \node [node] at (0, -1.5) (S1) {$\mathcal{S}_1$};
            \node [right=0cm of S1] (S1_t) {$= \{\{\underline{\{A\}}, \{A, \forall h.\{\{S\}\}\}, \{\lnot A, \exists h.\{\{L\}, \{\lnot S\}\}\}, \{\forall h.\{\{\lnot L\}\}, \forall h.\{\{\lnot W\}\}\}\}\}$};
            \draw[->] (S0) -- (S1) node [midway, right] {A1 (ex-1)};
            \node [node] at (0, -3) (S2) {$\mathcal{S}_2$};
            \node [right=0cm of S2] (S2_t) {$= \{\{\underline{\{A\}}, \{\lnot A, \exists h.\{\{L\}, \{\lnot S\}\}\}, \{\forall h.\{\{\lnot L\}\}, \forall h.\{\{\lnot W\}\}\}\}\}$};
            \draw[->] (S1) -- (S2) node [midway, right] {A1 (ex-2)};
            \node [node] at (0.5, -4.5) (S3) {$\mathcal{S}_3$};
            \node [right=0cm of S3] (S3_t) {$= \{\{\{A\}, \underline{\{\lnot A\}}, \{\forall h.\{\{\lnot L\}\}, \forall h.\{\{\lnot W\}\}\}\}\}$};
            \node [below=0cm of S3] (S3_clash) {clash};
            \draw[->] (S2) -- (S3) node [midway, right] {A1 (ex-3)};
            \node [node] at (0, -7) (S4) {$\mathcal{S}_4$};
            \node [right=0cm of S4] (S4_t) {$= \{\{\{A\}, \underline{\{\exists h.\{\{L\}, \{\lnot S\}\}\}}, \{\forall h.\{\{\lnot L\}\}, \forall h.\{\{\lnot W\}\}\}\}\}$};
            \draw[->] (S2) to[bend right=30] node[pos=0.8, right] {A1 (ex-4)} (S4);
            \node [node] at (0.5, -8.5) (S5) {$\mathcal{S}_5$};
            \node [right=0cm of S5] (S5_t) {$= \{\{\{A\}, \{\exists h.\{\{L\}, \{\lnot S\}\}\}, \underline{\{\forall h.\{\{\lnot L\}\}\}}\}\}$};
            \draw[->] (S4) -- (S5) node [midway, right] {A1 (ex-5)};
            \node [node] at (0.5, -10) (S6) {$\mathcal{S}_6$};
            \node [right=0cm of S6] (S6_t) {$= \{\{\{A\}, \underline{\{\exists h.\{\{\lnot L\}, \{L\}, \{\lnot S\}\}\}}\}\}$};
            \draw[->] (S5) -- (S6) node [midway, right] {A2 (ex-6)};
            \node [node] at (0.5, -11.5) (S7) {$\mathcal{S}_7$};
            \node [right=0cm of S7] (S7_t) {$= \{\{\{A\}\}, \underline{\{\{\lnot L\}, \{L\}, \{\lnot S\}\}}\}$};
            \node [below=0cm of S7] (S7_clash) {clash};
            \draw[->] (S6) -- (S7) node [midway, right] {A3 (ex-7)};
            \node [node] at (0, -14) (S8) {$\mathcal{S}_8$};
            \node [right=0cm of S8] (S8_t) {$= \{\{\{A\}, \{\exists h.\{\{L\}, \{\lnot S\}\}\}, \underline{\{\forall h.\{\{\lnot W\}\}\}}\}\}$};
            \draw[->] (S4) to[bend right=20] node[pos=0.9, right] {A1 (ex-8)} (S8);
            \node [node] at (0, -15.5) (S9) {$\mathcal{S}_9$};
            \node [right=0cm of S9] (S9_t) {$= \{\{\{A\}, \underline{\{\exists h.\{\{\lnot W\}, \{L\}, \{\lnot S\}\}\}}\}\}$};
            \draw[->] (S8) -- (S9) node [midway, right] {A2 (ex-9)};
            \node [node] at (0, -17) (S10) {$\mathcal{S}_{10}$};
            \node [right=0cm of S10] (S10_t) {$= \{\{\{A\}\}, \underline{\{\{\lnot W\}, \{L\}, \{\lnot S\}\}}\}$};
            \node [below=0cm of S10] (S10_SAT) {SAT};
            \draw[->] (S9) -- (S10) node [midway, right] {A3 (ex-10)};
        \end{tikzpicture}
        \caption{A derivation tree with A1, A2, and A3}
        \label{fig:solve_example_A1_A2}
    \end{figure}
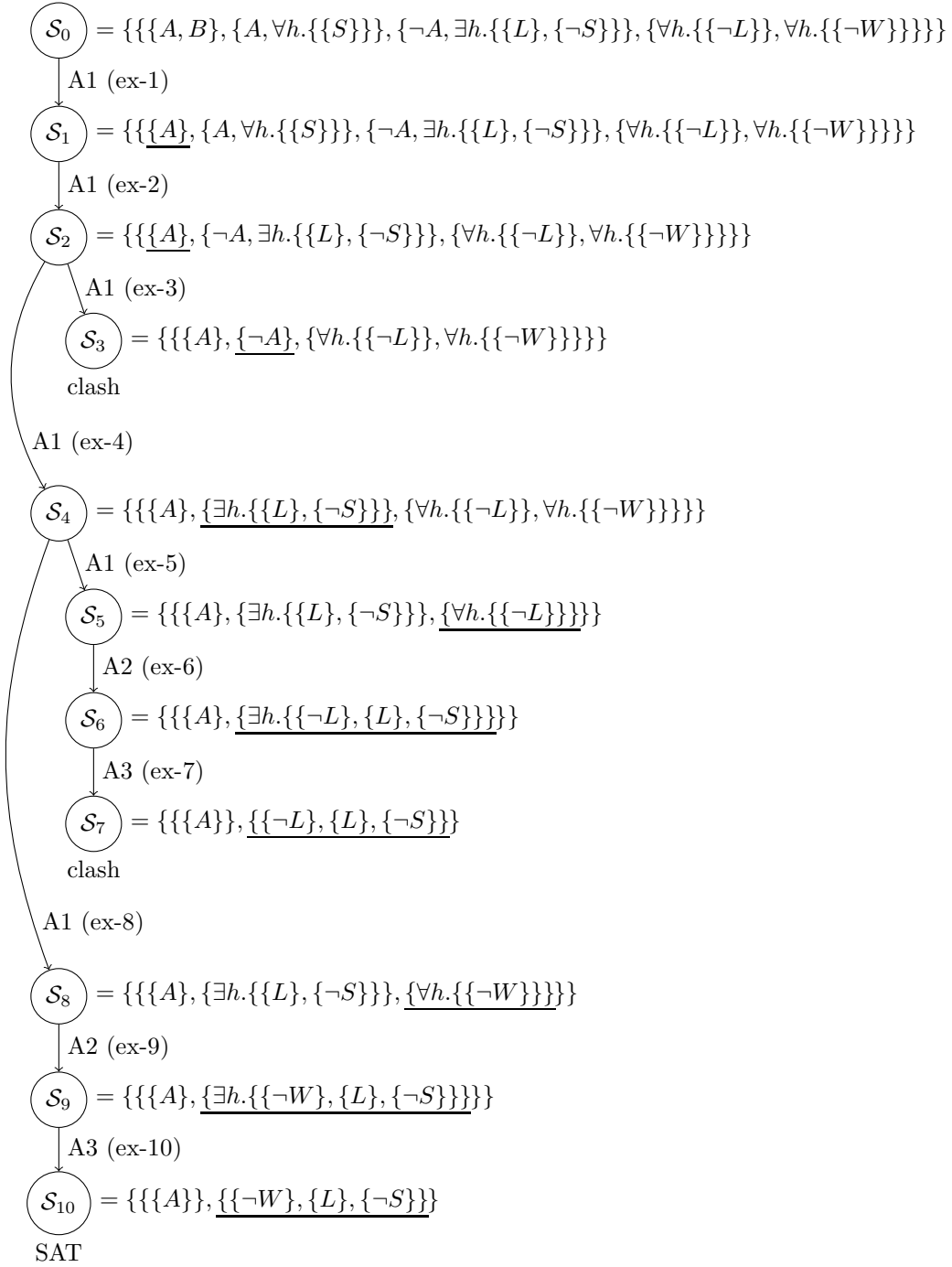
\end{example}

\begin{example}[Derivation with $\text{A1}^+$, $\text{A2}^+$, and A3]

    A derivation tree from the initial family of clause sets ${\mathcal{S}_0}' = \{\mathrm{CNF}(F)\}$ to ${\mathcal{S}_5}'$ is shown in Figure \ref{fig:solve_example_A1+}.
    For the first clause of the concept $\mathrm{CNF}(F) \in {\mathcal{S}_0}'$, ${\mathcal{S}_1}' = \{{F_1}'\}$ (ex-1') is derived by applying inference rule $\mathrm{A1}^+$ for the concept literal $Animal~(= \{A\})$.
    \begin{align*}
        {F_1}' &= Animal \sqcap Animal \sqcap \exists hasPart.(Leg \sqcap \lnot Small) \sqcap (\forall hasPart.\lnot Leg \sqcup \forall hasPart.\lnot Wing) \\
        &= Animal \sqcap \exists hasPart.(Leg \sqcap \lnot Small) \sqcap (\forall hasPart.\lnot Leg \sqcup \forall hasPart.\lnot Wing)
    \end{align*}
    
    For the third clause $\forall hasPart.\lnot Leg \sqcup \forall hasPart.\lnot Wing~(=\{\forall h.\{\{\lnot L\}\}, \forall h.\{\{\lnot W\}\}\})$ of ${F_1}' \in {\mathcal{S}_1}'$, ${\mathcal{S}_2}' = \{{F_2}'\}$ is derived by applying inference rule $\mathrm{A1}^+$ for the concept literal $\forall hasPart.\lnot Leg~(= \forall h.\{\{\lnot L\}\})$ (ex-2').
    \begin{align*}
        {F_2}' = Animal \sqcap \exists hasPart.(Leg \sqcap \lnot Small) \sqcap \forall hasPart.\lnot Leg
    \end{align*}
    
    For the third clause $\forall hasPart.\lnot Leg~(= \{\forall h.\{\{\lnot L\}\}\})$ of ${F_2}' \in {\mathcal{S}_2}'$, ${\mathcal{S}_3}' = \{{F_3}'\}$ is derived by applying inference rule $\mathrm{A2}^+$ for all existential role concepts (i.e., $\exists hasPart.(Leg \sqcap \lnot Small)~(=\{\exists h.\{ \{L\}, \{\lnot S\}\}\})$) (ex-3').
    \begin{equation*}
        {F_3}' = Animal \sqcap \exists hasPart.(\lnot Leg \sqcap Leg \sqcap \lnot Small)
    \end{equation*}
    
    For the second clause $\exists hasPart.(\lnot Leg \sqcap Leg \sqcap \lnot Small)~(= \{\exists h.\{\{\lnot L\}, \{L\}, \{\lnot S\}\}\})$ of ${F_3}' \in {\mathcal{S}_3}'$, ${\mathcal{S}_4}' = \{{F_4}', {F_5}'\}$ is derived by applying inference rule A3 for ${F_3}'$ where $F_{3}'$ is the parent of $F_{5}'$ (ex-4').
    \begin{align*}
        {F_4}' &= Animal \\
        {F_5}' &= \lnot Leg \sqcup Leg \sqcap \lnot Small
    \end{align*}
    
    In this case, ${F_5}'$ is unsatisfiable because it contains a clash (i.e., $Leg$ and its negation $\lnot Leg$).
    Therefore, another concept literal occurring in $\mathcal{S}_1'$ is selected in an application of inference rule $\mathrm{A1}^+$ (ex-2').
    That is, for the third clause $\forall hasPart.\lnot Leg \sqcup \forall hasPart.\lnot Wing~(=\{\forall h.\{\{\lnot L\}\}, \forall h.\{\{\lnot W\}\}\})$ of ${F_1}' \in {\mathcal{S}_1}'$, ${\mathcal{S}_5}' = \{{F_6}'\}$ is derived by applying inference rule $\mathrm{A1}^+$ for the concept literal $\forall hasPart.\lnot Wing~(= \forall h.\{\{\lnot W\}\})$ (ex-5').
    \begin{align*}
        {F_6}' = Animal \sqcap \exists hasPart.(Leg \sqcap \lnot Small) \sqcap \forall hasPart.\lnot Leg
    \end{align*}

    For the third clause $\forall hasPart.\lnot Wing~(= \{\forall h.\{\{\lnot W\}\}\})$ of ${F_6}' \in {\mathcal{S}_5}'$, ${\mathcal{S}_6}' = \{{F_7}'\}$ is derived by applying inference rule $\mathrm{A2}^+$ for all existential role concepts (i.e., $\exists hasPart.(Leg \sqcap \lnot Small)~(=\{\exists h.\{ \{L\}, \{\lnot S\}\}\})$) (ex-6').
    \begin{equation*}
        {F_7}' = Animal \sqcap \exists hasPart.(\lnot Wing \sqcap Leg \sqcap \lnot Small)
    \end{equation*}
    
    For the second clause $\exists hasPart.(\lnot Wing \sqcap Leg \sqcap \lnot Small)~(= \{\exists h.\{\{\lnot W\}, \{L\}, \{\lnot S\}\}\})$ of ${F_7}' \in {\mathcal{S}_6}'$, ${\mathcal{S}_7}' = \{{F_8}', {F_9}'\}$ is derived by applying inference rule A3 for ${F_7}'$ where $F_{7}'$ is the parent of $F_{9}'$ (ex-7').
    \begin{align*}
        {F_8}' &= Animal \\
        {F_9}' &= \lnot Wing \sqcap Leg \sqcap \lnot Small
    \end{align*}
    
    In this case, ${\mathcal{S}_7}' = \{{F_8}', {F_9}'\}$ is complete and clash-free.
    So we can decide that the concept $F$ is satisfiable.
\end{example}
    
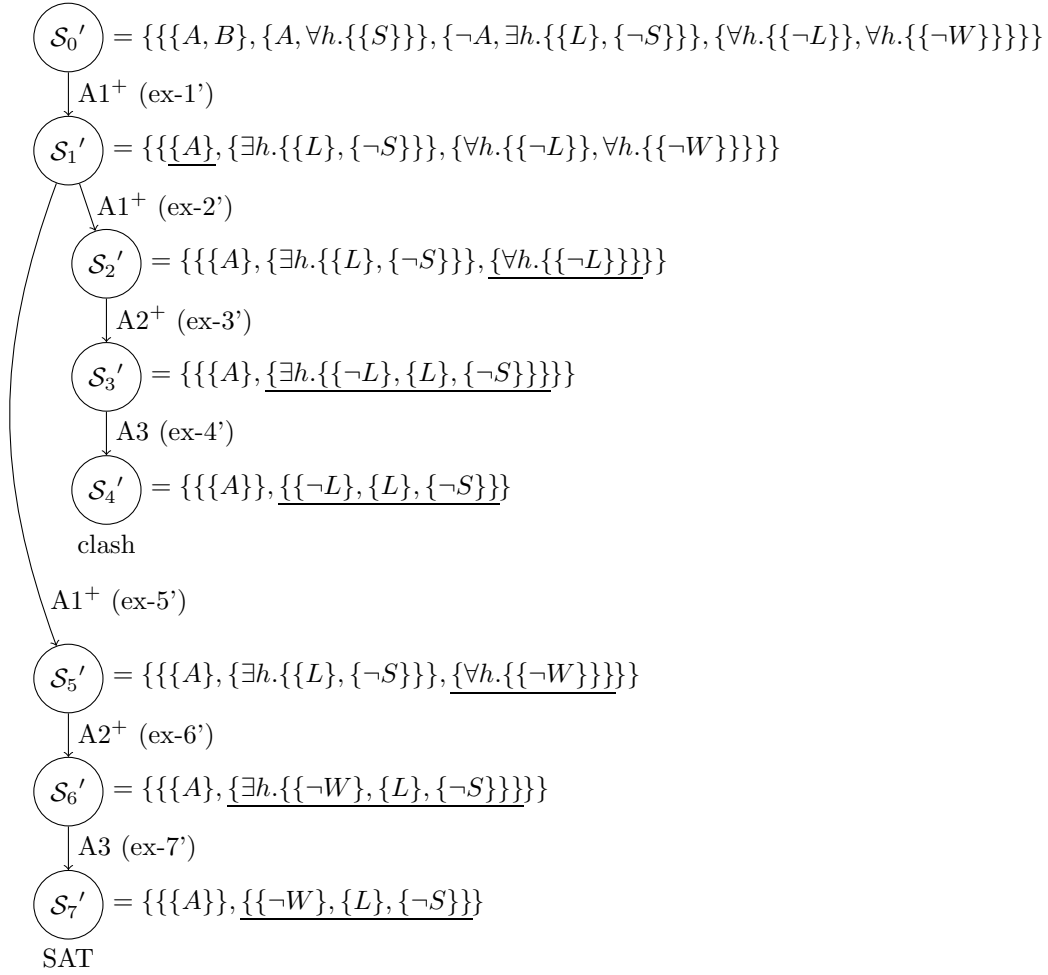
\begin{figure}[htb]
    \centering
    \begin{tikzpicture}[node/.style={circle,draw}]
        \node [node] at (0, 0) (S0) {${\mathcal{S}_0}'$};
        \node [right=0cm of S0] (S0_t) {$= \{\{\{A, B\}, \{A, \forall h.\{\{S\}\}\}, \{\lnot A, \exists h.\{\{L\}, \{\lnot S\}\}\}, \{\forall h.\{\{\lnot L\}\}, \forall h.\{\{\lnot W\}\}\}\}\}$};
        \node [node] at (0, -1.5) (S1) {${\mathcal{S}_1}'$};
        \node [right=0cm of S1] (S1_t) {$= \{\{\underline{\{A\}}, \{\exists h.\{\{L\}, \{\lnot S\}\}\}, \{\forall h.\{\{\lnot L\}\}, \forall h.\{\{\lnot W\}\}\}\}\}$};
        \draw[->] (S0) -- (S1) node [midway, right] {$\mathrm{A1}^+$ (ex-1')};
        \node [node] at (0.5, -3) (S2) {${\mathcal{S}_2}'$};
        \node [right=0cm of S2] (S2_t) {$= \{\{\{A\}, \{\exists h.\{\{L\}, \{\lnot S\}\}\}, \underline{\{\forall h.\{\{\lnot L\}\}\}}\}\}$};
        \draw[->] (S1) -- (S2) node [midway, right] {$\mathrm{A1}^+$ (ex-2')};
        \node [node] at (0.5, -4.5) (S3) {${\mathcal{S}_3}'$};
        \node [right=0cm of S3] (S3_t) {$= \{\{\{A\}, \underline{\{\exists h.\{\{\lnot L\}, \{L\}, \{\lnot S\}\}\}}\}\}$};
        \draw[->] (S2) -- (S3) node [midway, right] {$\mathrm{A2}^+$ (ex-3')};
        \node [node] at (0.5, -6) (S4) {${\mathcal{S}_4}'$};
        \node [right=0cm of S4] (S4_t) {$= \{\{\{A\}\}, \underline{\{\{\lnot L\}, \{L\}, \{\lnot S\}\}}\}$};
        \node [below=0cm of S4] (S4_clash) {clash};
        \draw[->] (S3) -- (S4) node [midway, right] {A3 (ex-4')};
        \node [node] at (0, -8.5) (S5) {${\mathcal{S}_5}'$};
        \node [right=0cm of S5] (S5_t) {$= \{\{\{A\}, \{\exists h.\{\{L\}, \{\lnot S\}\}\}, \underline{\{\forall h.\{\{\lnot W\}\}\}}\}\}$};
        \draw[->] (S1) to[bend right=20] node[pos=0.91, right] {$\mathrm{A1}^+$ (ex-5')} (S5);
        \node [node] at (0, -10) (S6) {${\mathcal{S}_6}'$};
        \node [right=0cm of S6] (S6_t) {$= \{\{\{A\}, \underline{\{\exists h.\{\{\lnot W\}, \{L\}, \{\lnot S\}\}\}}\}\}$};
        \draw[->] (S5) -- (S6) node [midway, right] {$\mathrm{A2}^+$ (ex-6')};
        \node [node] at (0, -11.5) (S7) {${\mathcal{S}_7}'$};
        \node [right=0cm of S7] (S7_t) {$= \{\{\{A\}\}, \underline{\{\{\lnot W\}, \{L\}, \{\lnot S\}\}}\}$};
        \draw[->] (S6) -- (S7) node [midway, right] {A3 (ex-7')};
        \node [below=0cm of S7] (S7_SAT) {SAT};
    \end{tikzpicture}
    \caption{An efficient derivation tree with $\mathrm{A1}^+$, $\mathrm{A2}^+$, and A3}
    \label{fig:solve_example_A1+}
\end{figure}

\begin{remark}
    Inference rule A1 selects a concept literal $L \in CL$ in a clause $CL$ but does not handle the concept literal $L$ or the complementary literal $\overline{L}$ in other clauses.
    This process may cause redundant derivation steps or backtracking.
    On the other hand, inference rule $\mathrm{A1}^+$ selects a concept literal $L \in CL$ in a clause $CL$ and simultaneously processes all clauses containing $L$ or $\overline{L'}$.
    As a result of this rule, every clause containing $L$ is transformed into the unit clause $\{L\}$ without any contradiction.
    In addition, the complementary literal $\overline{L}$ is removed from every clause containing $\overline{L}$ and is then not selected in later derivation steps.
    Therefore, $\mathrm{A1}^+$ is more efficient than A1.
\end{remark}
    
\begin{remark}
    Inference rule A2 can be applied to any clause (not limited to a unit clause), thus expanding the choice of inference rules in the early stages.
    If A2 selects a concept literal $L$, then the selection is equivalent to applying inference rule A1 to $L$ beforehand.
    By restricting the applications of A2, inference rule $\mathrm{A2}^+$ can be applied for a unit clause after inference rule $\mathrm{A1}^+$ is applied.
    This reduces the complexity of derivation because the selection of a concept literal in each clause is minimized by applying inference rule $\mathrm{A1}^+$.
    
\end{remark}

\section{Conclusion}

In this paper, we formalized CNF concepts in a conjunctive normal form for the description logic $\mathcal{ALC}$ where any $\mathcal{ALC}$ concept can be transformed to a CNF concept.
To decide the satisfiability of a CNF concept, we designed a decidable reasoning algorithm for clause sets in $\mathcal{ALC}$.
In particular, inference rules A1, A2, and A3 in our proposed reasoning algorithm provide efficient derivation steps owing to the clausal form of $\mathcal{ALC}$ concepts.
Furthermore, to improve the efficiency of the reasoning algorithm, inference rules A1 and A2 were improved to $\mathrm{A1}^+$ and $\mathrm{A2}^+$, thus reducing further derivation steps.
By formalizing  (restricted) CNF tableaux based on the semantics of CNF concepts, we proved the termination, soundness, and completeness of the two reasoning algorithms using A1, A2, and A3 as well as $\mathrm{A1}^+$, $\mathrm{A2}^+$, and A3, respectively.
The theoretical results are expected to lead to some applications of the proposed techniques for clausal reasoning to description logics---such as the resolution principle and solvers for the Boolean satisfiability problem (SAT solvers).

Our future work will include the formalization of a conjunctive normal form and its reasoning algorithm for more expressive description logics corresponding to OWL.
In addition, the proposed conjunctive normal form in $\mathcal{ALC}$ will be applied to fast SAT solvers to implement the reasoning algorithm for ontologies and knowledge bases.

\bibliographystyle{unsrt}
{
\small
\bibliography{bibliography}
}

\end{document}